\documentclass[11pt]{article}
\usepackage[margin=1in]{geometry}
\usepackage{amsmath,amssymb,amsthm,mathtools}
\usepackage{physics}
\usepackage{bm}
\usepackage{microtype}
\usepackage[dvipsnames]{xcolor}
\usepackage[hypertexnames=false,colorlinks=true,linkcolor=MidnightBlue,citecolor=ForestGreen,urlcolor=BrickRed]{hyperref}
\usepackage[nameinlink]{cleveref}
\numberwithin{equation}{section}
\newtheorem{theorem}{Theorem}[section]
\newtheorem{proposition}[theorem]{Proposition}
\newtheorem{corollary}[theorem]{Corollary}
\theoremstyle{remark}

\title{Global vs.\ Product Observables in Bipartite Quantum Systems: The Sharp Bound}
\author{Zhi Li\\{\normalsize IBM Research}\and Xiaofei Shi\\{\normalsize University of Toronto}}
\date{}
\begin{document}
\maketitle
\begin{abstract}
To probe a bipartite quantum system, one may use arbitrary global operators or restrict to product operators acting separately on the two subsystems. We determine the sharp universal comparison between the resulting norms. For every $z\in M_n\otimes M_m$, we prove
\begin{equation*}
  \norm{z}_1
  \le \sqrt{2}\,\min\{n,m\}\,\norm{z}_\varepsilon,
\end{equation*}
where $\norm{\cdot}_1$ is the trace norm and $\norm{\cdot}_\varepsilon$ is the injective tensor norm associated with the trace norms on $M_n$ and $M_m$. To prove the upper bound, we establish an $L_1$ noncommutative Khintchine inequality whose random coefficients are the entries of a Haar unitary. 
We also show that the coefficient $\sqrt{2}$ is sharp. 
As applications, we show that the same sharp constant governs the gap between bipartite correlation measured in trace norm and that measured by a correlation function, and obtain an improved universal upper bound for quantum data-hiding. The upper bound has also been formalized and machine-checked in Lean.
\end{abstract}
\section{Introduction and main results}\label{sec:introduction}

A bipartite quantum system may be probed by an observable acting jointly on the two subsystems or through observables applied separately to each subsystem. A joint observable has access to the full structure of the composite system, whereas a product observable only records the response to a pair of local probes. How much can be missed when only product observables are available?

This question can be naturally formulated as a comparison of different norms on the matrix space.
Let $M_k$ denote the algebra of complex $k\times k$ matrices, and identify $M_n\otimes M_m$ with $M_{nm}$. 
For $z\in M_n\otimes M_m$, the trace norm is obtained by testing $z$ against arbitrary contractive operators on the joint system:
\begin{equation}
  \norm{z}_1 = \sup_{\norm{y}_\infty\leq 1} \abs{\tr(yz)}.
\end{equation}
Restricting to product operators gives
\begin{equation}
  \norm{z}_\varepsilon = \sup_{\norm{a\otimes b}_\infty\le1} \abs{\tr\qty((a\otimes b)z)}. \label{eq:1.1}
\end{equation}
This is the injective tensor norm on $S_1^n\otimes S_1^m$, where $S_1^k$ is the space $M_k$ equipped with the trace norm. 
We seek the best ratio between these two norms:
\begin{equation}\label{eq:defC}
  C_{n,m}=\sup_{0\ne z\in M_n\otimes M_m} \frac{\norm{z}_1}{\norm{z}_\varepsilon}.
\end{equation}

Although the question is straightforward to pose, its answer is not immediate.
A priori, even the correct scaling with the local dimensions $n$ and $m$ is unclear.

To begin, note that any operator provides a lower bound
In particular, the SWAP operator $F_d$ on $\mathbb C^d\otimes\mathbb C^d$, embedded into $\mathbb C^n\otimes\mathbb C^m$, where $d=\min\{n,m\}$, gives:
\begin{equation}\label{eq:swaplowerbound}
  C_{n,m}\geq \min\{n,m\}.
\end{equation}
That is, $C_{n,m}$ must grow at least linearly with the smaller local dimension.

Turning to upper bounds, the story becomes richer. 
Brand{\~a}o and Horodecki first proved the direct estimate $C_{n,m}\leq \min\{n,m\}^2$ \cite[Appendix C]{BH15}.
Lami, Palazuelos, and Winter subsequently placed the problem in the framework of projective and injective tensor norms; their general comparison recovers the same quadratic bound \cite[Proposition 21]{LPW18}.
Aubrun, Lami, Palazuelos, Szarek, and Winter later improved the exponent, proving $C_{n,m}\lesssim \min\{n,m\}^{3/2}$ through a sharper estimate of the projective-to-injective norm ratio \cite[Theorem 8 and Corollary 9]{ALPSW20}.
Finally, using the noncommutative Grothendieck inequality \cite{Pis78,Haa85,Pis12}, Corr{\^e}a, Lami, and Palazuelos proved the direct linear estimate \cite[Proposition 2.3]{CLP22}
\begin{equation}
  C_{n,m} \leq 2\min\{n,m\},
\end{equation}
matching the lower bound \cref{eq:swaplowerbound} up to a constant factor 2, where 2 is the exact value of the noncommutative Grothendieck constant \cite{Haa85,HI95}.

At this point, it is tempting\footnote{The authors pursued this conjecture until finding a counterexample.} to conjecture that the SWAP lower bound is sharp, namely, that $C_{n,m}=\min\{n,m\}$. Two observations support this conjecture. First, the SWAP operator (relatedly, transpose map and maximally entangled states) is extremal in several norm-comparison problems in quantum information. 
Second, the commutative analogue\footnote{More precisely, for $a=(a_{ij})\in \ell_1^n\otimes\ell_1^m$, we have $\norm{a}_\varepsilon=\sup_{\abs{s_i},\abs{t_j}\leq1}\abs{\sum_{i,j}a_{ij}s_it_j}$ and $\norm{a}_1=\sum_{i,j}\abs{a_{ij}}$. 
Taking $s$ to be a coordinate vector and optimizing over $t$ gives $\norm{a}_\varepsilon\geq \max_i\sum_j\abs{a_{ij}}\geq n^{-1}\norm{a}_1$.} satisfies the bound $C_{n,m}^{\mathrm{(comm)}}\leq \min\{n,m\}$,
although a direct application of Grothendieck's inequality would instead give the weaker estimate $C_{n,m}^{\mathrm{(comm)}}\leq K_G^\mathbb{C}\min\{n,m\}$, where $K_C^\mathbb{C}> 1.338$ \cite{Pis12}.

It turns out this is not the case.
In this work, we improve the coefficient from $2$ to $\sqrt2$ and show that the latter is optimal.
\begin{theorem}\label{thm:main}
For every $n,m\ge1$ and every $z\in M_n\otimes M_m$,
\begin{equation}
  \norm{z}_1 \le \sqrt{2}\,\min\{n,m\}\,\norm{z}_\varepsilon. \label{eq:1.2}
\end{equation}
The constant is optimal in the sense that
\begin{equation}
  \lim_{n\to \infty}\sup_{m\ge1}\frac{C_{n,m}}{\min\{n,m\}} = \sqrt{2}. \label{eq:1.3}
\end{equation}
\end{theorem}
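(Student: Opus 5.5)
The upper bound comes from randomizing the global test operator into product form with Haar unitaries; the lower bound needs an explicit operator, built from a Gaussian/Haar-type extremal example rather than from SWAP.

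\textbf{Upper bound.} Assume $n\le m$. By duality there is a unitary $y$ on $\mathbb C^n\otimes\mathbb C^m$ with $\norm{z}_1=\tr(yz)$. Expand $y=\sum_{i,j=1}^n e_{ij}\otimes y_{ij}$ with blocks $y_{ij}\in M_m$, so that
\[
\tr(yz)=\sum_{i,j}\tr\qty((e_{ij}\otimes y_{ij})z).
\]
The goal is to write this as an average of product tests $\tr((a\otimes b)z)$ with $\norm{a}_\infty,\norm{b}_\infty$ controlled.

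Let $U=(u_{ij})$ be an $n\times n$ Haar unitary. Orthogonality of Haar entries gives $\mathbb E[\bar u_{ij}u_{kl}]=\delta_{ik}\delta_{jl}/n$. Put $a=U^{T}$ (or $\bar U$, with indices arranged accordingly) and $b=\sum_{ij}u_{ij}y_{ij}$. Then
\[
\mathbb E\,\tr\qty((a\otimes b)z)=\tfrac1n\tr(yz).
\]
Since $\abs{\tr((a\otimes b)z)}\le\norm{a}_\infty\norm{b}_\infty\norm{z}_\varepsilon$ and $\norm{a}_\infty=1$, this yields
\[
\norm{z}_1\le n\,\mathbb E\norm{\textstyle\sum u_{ij}y_{ij}}_\infty\,\norm{z}_\varepsilon.
\]
The operator norm inside is not bounded by a constant in general, so the argument must be run in dual form. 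Test $z$ against $\sum_{ij}e_{ij}\otimes y_{ij}$ and view $z$ as a vector of blocks $z_{ji}\in M_m$. Then
\[
\norm{z}_\varepsilon\ \ge\ \sup_{\norm{b}\le1}\abs{\tr\qty(b\textstyle\sum u_{ij}z_{ji})}=\norm{\textstyle\sum_{ij}u_{ij}z_{ji}}_1
\]
for each fixed $U$, hence also after averaging over $U$. The upper bound therefore reduces to an $L_1$ noncommutative Khintchine inequality with Haar-unitary coefficients,
\[
\mathbb E\norm{\textstyle\sum_{ij}u_{ij}x_{ij}}_1\ \ge\ \frac{1}{\sqrt2\,n}\,\norm{x}_{S_1^{nm}},
\]
where $x=\sum e_{ij}\otimes x_{ij}$.

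To prove this inequality, I would follow the Haagerup/Pisier route for the $L_1$ Khintchine lower bound. Use duality with the row/column decomposition: $\norm{x}_1$ is dominated by $\inf\{\norm{(\sum a_{ij}^*a_{ij})^{1/2}}_1+\norm{(\sum b_{ij}b_{ij}^*)^{1/2}}_1\}$-type quantities, with constant $\sqrt n$ per side after normalizing $\mathbb E\abs{u_{ij}}^2=1/n$. Then pass the sharp constant $1/\sqrt2$ through the complex-Gaussian-type contraction, exploiting the fact that Haar entries give exact second and fourth moments, $\mathbb E\abs{u}^4=2/(n(n+1))$. Concretely, I would aim for the $\sqrt2$ version of $\norm{\cdot}_1\ge(\norm{\cdot}_2^3/\norm{\cdot}_4^2)$ through Hölder, using the exact Weingarten fourth-moment formula to get $\mathbb E\norm{\sum u_{ij}x_{ij}}_4^4\le\frac{2}{n^2}\cdot(\text{row/col terms})$.

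I expect this step to be the main obstacle, because Hölder-based arguments usually lose constants. Getting exactly $\sqrt2$ likely requires the Haagerup–Itoh-style optimization over the decomposition $x=x^{\rm row}+x^{\rm col}$, which mirrors why $2$ is the Grothendieck constant while the Khintchine-type splitting here should give $\sqrt2$.

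\textbf{Lower bound.} I would seek $z$ with $\norm{z}_1/\norm{z}_\varepsilon\to\sqrt2\, n$ as $n\to\infty$ with $m$ large. The natural candidate is the Haar-averaged version of the SWAP: take $m=n^2$ or larger and
\[
z=\sum_{ij}e_{ij}\otimes g_{ij},
\]
with $g_{ij}$ chosen to saturate the Khintchine inequality, e.g.\ free-semicircular/Cuntz-type row-plus-column operators such as $g_{ij}=e_{1i}\otimes e_{j1}+e_{i1}\otimes e_{1j}$ in a large auxiliary space. Compute $\norm{z}_1$ exactly, and bound $\norm{z}_\varepsilon$ by estimating $\sup_{a,b}\abs{\tr((a\otimes b)z)}$, using that the sum of a row and a column contribution has norm $\sqrt2$ rather than $2$. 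The limit claim then combines this construction with the upper bound.
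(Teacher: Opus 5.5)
\textbf{Upper bound.} Your reduction matches the paper's. Averaging product tests over Haar unitaries reduces everything to $\norm{z}_1\le\sqrt2\,n\,\mathbb E_w\norm{\sum_{ij}w_{ij}x_{ij}}_1$, and the row/column step $\norm{z}_1\le\sqrt n\,\norm{(x_{ij})}_{L_1[R+C]}$ is elementary. The gap is that the remaining inequality $\mathbb E_w\norm{\sum_{ij}w_{ij}x_{ij}}_1\ge(2n)^{-1/2}\norm{(x_{ij})}_{L_1[R+C]}$ carries the entire constant, and you leave it unproved.

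The H\"older interpolation you sketch cannot give it as stated. The bound $\norm{f}_1\ge\norm{f}_2^3/\norm{f}_4^2$ in $L_p(U(n);S_p^m)$ controls $\norm{f}_1$ only by Hilbert--Schmidt data, which is useless for non-flat coefficients. Already for $n=1$ and $x=\mathrm{diag}(1,m^{-1/2},\dots,m^{-1/2})$ it gives a lower bound of order $1$, while the target requires about $\sqrt{m/2}$.

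The missing idea is to argue dually, as Haagerup--Musat do. For $(y_{ij})$ in the unit ball of $L_\infty[R\cap C]$, construct $F:U(n)\to M_m$ with $\mathbb E_w[\overline{w_{ij}}F(w)]=y_{ij}$ and $\sup_w\norm{F(w)}_\infty\le\sqrt{2n}$. The fourth moment must enter as the operator inequalities $\mathbb E(S^*S)^2\le 2I$ and $\mathbb E(SS^*)^2\le 2I$ for $S=\sqrt n\sum_{ij}w_{ij}y_{ij}$, not as a Schatten-$4$ estimate. In the Weingarten expansion, the negative terms $N_1,N_2\ge A^2/n$ are needed to absorb the factor $n^2/(n^2-1)$. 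Since $(s-2^{-1/2})_+\le s^2/(2\sqrt2)$, capping the singular values of $S$ at $2^{-1/2}$ leaves an excess whose first-order coefficients have $L_\infty[R\cap C]$ norm at most $1/2$ (by Bessel's inequality). Iterating gives $\norm{F}_\infty\le\sqrt{n/2}\sum_k 2^{-k}=\sqrt{2n}$. No optimization over the row/column splitting of $x$ is involved.

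\textbf{Lower bound.} There is no working witness. For your explicit choice $g_{ij}=e_{1i}\otimes e_{j1}+e_{i1}\otimes e_{1j}$, the first part of $z$ is a partial isometry of rank $n^2$ and the second is rank one with trace norm $n$, so $\norm{z}_1\le n^2+n$. But $T_z(I)=\sum_i g_{ii}$ swaps $e_i\otimes e_1\leftrightarrow e_1\otimes e_i$ for $i\ge2$ and acts as $2$ on $e_1\otimes e_1$. Hence $\norm{z}_\varepsilon\ge 2n$, and $\norm{z}_1/(n\norm{z}_\varepsilon)\to 1/2$, worse than SWAP. Random choices do not rescue this: with i.i.d.\ Ginibre blocks, $z$ and every $T_z(w)$, $w\in U(n)$, are Ginibre with the same singular-value profile, so the ratio is asymptotically at most $n$.

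The $\sqrt2$ does not come from ``row plus column has norm $\sqrt2$''; in the paper it comes from rank halving plus flattening. Take the CAR matrix $C=\sum_{ij}E_{ij}\otimes c_{ij}$ on the $D=2^{n^2}$-dimensional Fock space and its polar part $V$, so that $\norm{V}_1=\norm{V}_2^2=\operatorname{rank}V$. The square-zero identity $T_C(a)^2=0$ survives $C\mapsto Cq(C^*C)$, so $T_V(a)^2=0$. This gives $\operatorname{rank}T_V(a)\le D/2$ and hence $\norm{T_V(a)}_1\le\sqrt{D/2}\,\norm{T_V(a)}_2$. $U(n)\times U(n)$ covariance gives $\norm{T_V(a)}_2=\norm{V}_2\norm{a}_2/n$. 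Together these yield $\norm{V}_1/(n\norm{V}_\varepsilon)\ge\sqrt{2\operatorname{rank}V/(nD)}$. A quarter-circle law for the singular values of $\sqrt{2/n}\,C$ then shows $\operatorname{rank}V/(nD)\to 1$. None of these ingredients appears in your plan.
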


The main ingredient for the upper bound is a Haar-unitary $L_1$ noncommutative Khintchine inequality that we establish here, in the line of \cite{LP86,LP91,HM07} and formulated using the standard row-plus-column norm (see \cref{sec:preliminaries} below).
For every family $(x_{ij})_{i,j=1}^n\subset M_m$, we prove:
\begin{equation}
  \mathbb E_{w\in U(n)}\norm{\sum_{i,j=1}^n w_{ij}x_{ij}}_1
  \ge \frac1{\sqrt{2n}}\,\norm{(x_{ij})}_{L_1[R+C]}.
  \label{eq:1.4}
\end{equation}
To establish asymptotic sharpness, we utilize a fermionic canonical anticommutation relations (CAR)-based construction inspired by \cite{HI95,HM07}, and introduce a functional-calculus step that flattens the singular-value distribution.

We give two applications of our inequality.
For a bipartite state $\rho_{AB}$, the trace norm of the correlation operator $\rho_{AB}-\rho_A\otimes\rho_B$ measures the correlation detectable by arbitrary joint observables, whereas its injective norm is the largest correlation function detectable by product operators.
We show that the CAR-based construction yielding asymptotic sharpness for $C_{n,m}$ can be modified to produce correlation operators, so the bound $\sqrt{2}\min\{n,m\}$ remains asymptotically sharp under this restriction.
We then compare the injective norm with distinguishability norms arising from local measurements.
This yields the improved bound $R_{\mathrm{LO}}(n,m)\leq \sqrt{2}\min\{n,m\}$ for quantum data-hiding \cite{TDL01,DLT02,MWW09,LPW18,ALPSW20,CLP22}.

\section{Preliminaries and notation}\label{sec:preliminaries}

For $1\le p<\infty$, the Schatten $p$-norm on $M_k$ is $\norm{x}_p=\qty(\tr\abs{x}^p)^{1/p}$, while $\norm{\cdot}_\infty$ denotes the operator norm. All traces are unnormalized; when a normalized trace is used, it is denoted by $\tau$.

For $z=\sum_{i,j=1}^n E_{ij}\otimes x_{ij}\in M_n\otimes M_m$, define the associated linear map
\begin{equation}
  T_z:M_n\longrightarrow M_m, \qquad T_z(w) = \sum_{i,j=1}^n w_{ij}x_{ij}. \label{eq:2.1}
\end{equation}
Trace-norm duality, together with the fact that the operator-norm unit ball of $M_n$ is the closed convex hull of $U(n)$, gives
\begin{equation}
  \norm{z}_\varepsilon = \sup_{w\in U(n)} \norm{T_z(w)}_1. \label{eq:2.2}
\end{equation}

We also use the standard row and column norms from operator space theory \cite{Pis03}. For a finite family $x=(x_\alpha)\subset M_m$, define
\begin{equation}
\begin{aligned}
    \norm{x}_{L_1[C]}=\biggl\lVert(\sum_\alpha x_\alpha^*x_\alpha)^{1/2}\biggr\rVert_1, \qquad
    \norm{x}_{L_1[R]}=\biggl\lVert(\sum_\alpha x_\alpha x_\alpha^*)^{1/2}\biggr\rVert_1.
\end{aligned}
\end{equation}
The corresponding sum norm is
\begin{equation}
  \norm{x}_{L_1[R+C]} = \inf_{x_\alpha=a_\alpha+b_\alpha} \qty(\norm{a}_{L_1[R]} + \norm{b}_{L_1[C]}). \label{eq:2.3}
\end{equation}
Its dual is the intersection norm
\begin{equation}
  \norm{y}_{L_\infty[R\cap C]} = \max\qty{\biggl\lVert\sum_\alpha y_\alpha y_\alpha^*\biggr\rVert_\infty^{1/2}, \biggl\lVert\sum_\alpha y_\alpha^*y_\alpha\biggr\rVert_\infty^{1/2}}. \label{eq:2.4}
\end{equation}
Thus, $\norm{y}_{L_\infty[R\cap C]}\le1$ if and only if
\begin{equation}
  \sum_\alpha y_\alpha^*y_\alpha\le I, \qquad \sum_\alpha y_\alpha y_\alpha^*\le I. \label{eq:2.6}
\end{equation}
With respect to the trace pairing, sum-intersection duality gives
\begin{equation}
  \norm{x}_{L_1[R+C]} = \sup_{\norm{y}_{L_\infty[R\cap C]}\le1} \biggl\lvert\sum_\alpha\tr(y_\alpha^*x_\alpha)\biggr\rvert. \label{eq:2.5}
\end{equation}

\section{The upper bound}\label{sec:upper-bound}

In this section, we prove the upper bound \cref{eq:1.2}.
Throughout this section, we assume $n\le m$ without loss of generality; 
the general case follows by exchanging the two tensor factors.

The proof combines two estimates for block matrices.
The first, \cref{prop:block-matrix}, is an elementary estimate that relates the trace norm to the $L_1[R+C]$ norm.
The second, \cref{thm:khintchine}, is a sharp noncommutative Khintchine inequality proved by considering its dual statement, which takes the form of a bounded Fourier synthesis of matrix-valued functions over $U(n)$.

\subsection{Trace norm estimation for block matrices}
\begin{proposition}\label{prop:block-matrix}
For $z=\sum_{i,j=1}^n E_{ij}\otimes x_{ij}\in M_n\otimes M_m$, one has
\begin{equation}
  \norm{z}_1 \le \sqrt n\, \norm{(x_{ij})}_{L_1[R+C]}. \label{eq:3.1}
\end{equation}
\end{proposition}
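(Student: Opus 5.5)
By the triangle inequality and the definition \cref{eq:2.3} of the sum norm, it suffices to prove the two one-sided estimates
\begin{equation*}
  \norm{z}_1\le\sqrt n\,\norm{(x_{ij})}_{L_1[C]},\qquad
  \norm{z}_1\le\sqrt n\,\norm{(x_{ij})}_{L_1[R]}.
\end{equation*}
Indeed, given any decomposition $x_{ij}=a_{ij}+b_{ij}$, write $z=z_a+z_b$ with $z_a=\sum E_{ij}\otimes a_{ij}$ and $z_b=\sum E_{ij}\otimes b_{ij}$. Applying the row estimate to $z_a$ and the column estimate to $z_b$, then taking the infimum over decompositions, gives \cref{eq:3.1}. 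The row case follows from the column case by taking adjoints. The adjoint $z^*=\sum_{i,j}E_{ji}\otimes x_{ij}^*$ has the same trace norm, and its family of blocks $(x_{ij}^*)$ has column norm $\norm{(\sum x_{ij}x_{ij}^*)^{1/2}}_1=\norm{(x_{ij})}_{L_1[R]}$. So only the column estimate needs proof.

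For the column estimate, split $z$ into its block columns: $z=\sum_{j=1}^n z_j$ with $z_j=\sum_i E_{ij}\otimes x_{ij}$. Each $z_j$ equals $(e_j^*\otimes I)$ composed with the column operator $c_j=\sum_i e_i\otimes x_{ij}$, padded into block column $j$. Consequently $z_j^*z_j=E_{jj}\otimes\sum_i x_{ij}^*x_{ij}$, and hence
\begin{equation*}
  \norm{z_j}_1=\norm{S_j}_1,\qquad S_j:=\Bigl(\sum_i x_{ij}^*x_{ij}\Bigr)^{1/2}.
\end{equation*}
The triangle inequality then gives $\norm{z}_1\le\sum_j\norm{S_j}_1$. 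What remains is the comparison
\begin{equation*}
  \sum_{j=1}^n\norm{S_j}_1\le\sqrt n\,\Bigl\lVert\Bigl(\sum_j S_j^2\Bigr)^{1/2}\Bigr\rVert_1,
\end{equation*}
since $\sum_j S_j^2=\sum_{i,j}x_{ij}^*x_{ij}$, whose square root has trace norm $\norm{(x_{ij})}_{L_1[C]}$.

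The comparison is the main step. I would prove it by duality rather than by a Hölder argument on eigenvalues, because the $S_j$ need not commute. Choose contractions $u_j$ with $\norm{S_j}_1=\tr(u_jS_j)$, and let $v=\sum_j e_j^*\otimes u_j$ be the block row $(u_1,\dots,u_n)$. Then $vv^*=\sum_j u_ju_j^*\le nI$, so $\norm{v}_\infty\le\sqrt n$. Let $s=\sum_j e_j\otimes S_j$ be the block column, so that $\abs{s}=(\sum_j S_j^2)^{1/2}$. Since $\sum_j\tr(u_jS_j)=\tr(vs)$, we get
\begin{equation*}
  \sum_j\norm{S_j}_1=\tr(vs)\le\norm{v}_\infty\norm{s}_1=\sqrt n\,\Bigl\lVert\Bigl(\sum_j S_j^2\Bigr)^{1/2}\Bigr\rVert_1,
\end{equation*}
as required.

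An alternative avoids the intermediate $S_j$. Test $z$ directly against a contraction $y=\sum E_{ji}\otimes y_{ji}$, and bound $\abs{\tr(yz)}$ by pairing the column family $(x_{ij})$ with $(y_{ji})$. One then needs the $L_\infty[R]$ norm of the $y$-blocks, grouped by $j$, to be at most $\sqrt n$, and this follows from $\norm{y}_\infty\le1$ in the same way. I expect the only delicate point is checking these block identities and the index conventions.
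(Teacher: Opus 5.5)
Your proof is correct and has the same architecture as the paper's. Both use the block-column decomposition with $\norm{Z_j}_1=\tr S_j$, the triangle inequality, the row case by symmetry, and the infimum over decompositions $x_{ij}=a_{ij}+b_{ij}$.

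The only real difference is how you prove the comparison $\sum_j\tr S_j\le\sqrt n\,\tr\bigl(\sum_j S_j^2\bigr)^{1/2}$. The paper uses operator concavity of $t\mapsto t^{1/2}$. This gives the Loewner-order inequality $\sum_j S_j\le\sqrt n\,\bigl(\sum_j S_j^2\bigr)^{1/2}$, and the paper then takes the trace. You instead use H\"older's inequality between the block column $s$ and the block row $v$, with $vv^*\le nI$.

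Your version needs nothing beyond H\"older's inequality. Since $S_j\ge0$, you can simply take $u_j=I$. The paper's version yields a slightly stronger operator inequality at no extra cost, but only the traced form is needed here.

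One harmless index slip: the block family of $z^*$ is $(x_{ji}^*)$, not $(x_{ij}^*)$. This does not change the column norm, since it sums over all pairs.
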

\begin{proof}
We first prove the estimate for the $L_1[C]$ norm.
To do so, we decompose $z$ into columns.
Set $Z_j=\sum_{i=1}^n E_{ij}\otimes x_{ij}$; then $Z_j^*Z_j=E_{jj}\otimes\sum_i x_{ij}^*x_{ij}$, so $\norm{Z_j}_1=\tr(\sum_i x_{ij}^*x_{ij})^{1/2}$.
The triangle inequality and operator concavity of $t\mapsto t^{1/2}$ give
\begin{equation}
\begin{aligned}
    \norm{z}_1
    \le \sum_{j=1}^n\norm{Z_j}_1 
    =\sum_{j=1}^n \tr\Bigl(\sum_i x_{ij}^*x_{ij}\Bigr)^{1/2}
    \le \sqrt n\,\tr\Bigl(\sum_{i,j}x_{ij}^*x_{ij}\Bigr)^{1/2}
    =\sqrt n\,\norm{(x_{ij})}_{L_1[C]}.
\end{aligned}
  \label{eq:3.2}
\end{equation}

Applying the same argument to rows gives
\begin{equation}
  \norm{z}_1 \le \sqrt n\, \norm{(x_{ij})}_{L_1[R]}. \label{eq:3.3}
\end{equation}

For a decomposition $x_{ij}=a_{ij}+b_{ij}$, applying \cref{eq:3.3} to $(a_{ij})$ and \cref{eq:3.2} to $(b_{ij})$, we get:
\begin{equation}
\|z\|_1\le \|(a_{ij})\|_1+\|(b_{ij})\|_1 \le \sqrt n \|(a_{ij})\|_{L_1[R]} + \sqrt n \|(b_{ij})\|_{L_1[C]}.
\end{equation}
Taking the infimum over all decompositions proves \cref{eq:3.1}.
\end{proof}

\subsection{Noncommutative Khintchine inequality}

In this subsection, we prove the following version of the noncommutative Khintchine inequality.
\begin{theorem}\label{thm:khintchine}
For every family $(x_{ij})\subset M_m$ with $1\leq i,j\leq n$, one has:
\begin{equation}
  \mathbb E_{w\in U(n)} \norm{\sum_{i,j=1}^n w_{ij}x_{ij}}_1 \ge \frac1{\sqrt{2n}} \norm{(x_{ij})}_{L_1[R+C]}, \label{eq:3.29}
\end{equation}
Here the average is with respect to the standard Haar measure on $U(n)$.
\end{theorem}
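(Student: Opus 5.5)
We would prove \cref{thm:khintchine} by duality, turning it into a bounded Fourier synthesis problem on $U(n)$. By \cref{eq:2.5}, it suffices to show the following. For every family $y=(y_{ij})$ with $\norm{y}_{L_\infty[R\cap C]}\le1$, i.e.\ satisfying \cref{eq:2.6}, there is a measurable $\Phi:U(n)\to M_m$ with
\begin{equation*}
  \sup_{w}\norm{\Phi(w)}_\infty\le\sqrt{2n}
  \qquad\text{and}\qquad
  \mathbb E_w\bigl[\overline{w_{ij}}\,\Phi(w)\bigr]=y_{ij}\quad\text{for all } i,j.
\end{equation*}
Given such $\Phi$, Hölder's inequality yields the claim:
\begin{equation*}
  \Bigl|\sum_{i,j}\tr(y_{ij}^*x_{ij})\Bigr|
  =\Bigl|\mathbb E_w\tr\Bigl(\Phi(w)^*\sum_{i,j}w_{ij}x_{ij}\Bigr)\Bigr|
  \le\sqrt{2n}\;\mathbb E_w\Bigl\lVert\sum_{i,j}w_{ij}x_{ij}\Bigr\rVert_1 .
\end{equation*}
Taking the supremum over $y$ then gives \cref{eq:3.29}.

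The natural first candidate comes from the orthogonality relation $\mathbb E[\overline{w_{ij}}w_{kl}]=\delta_{ik}\delta_{jl}/n$. It is $\Phi_0(w)=n\,A(w)$, where $A(w)=\sum_{i,j}w_{ij}y_{ij}$. Writing $A(w)$ as a row--(unitary)--column product, $A(w)=R\,(w\otimes I)\,C$, with $R=[\,\cdot\,]$ built from the $y_{ij}$ arranged as a row and $C$ as a column, only gives $\norm{A}_\infty\lesssim\sqrt n$. That is far too large, because the row/column constraints control $\sum y^*y$ and $\sum yy^*$ but not the mixed structure. So the plan is to correct $\Phi_0$ by higher-degree terms which do not change the degree-one Fourier coefficients. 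We mimic Haagerup's truncation argument for the free/CAR Khintchine inequality (as in \cite{HM07}). Use the $2\times2$ dilation $\hat A=\begin{pmatrix}0&A\\A^*&0\end{pmatrix}$. Look for $\Phi$ of the form $\Phi=n\,A-\lambda\,A A^*A+\cdots$, or more generally $\Phi=g(\hat A)$ for an odd polynomial or bounded function $g$. Then choose $g$, together with a compensating rescaling, so that the projection of $\Phi$ onto the span of $\{w_{ij}\}$ equals $n A$.

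The key computational step is the linear projection of $A A^*A$, and of the higher odd powers. By Weingarten calculus,
\begin{equation*}
  \mathbb E\bigl[\overline{w_{ij}}\,w_{kl}\overline{w_{pq}}w_{rs}\bigr]
\end{equation*}
is explicit. The projection of $AA^*A$ onto degree one therefore reduces to a combination of terms like $\bigl(\sum y y^*\bigr)y_{ij}$, $y_{ij}\bigl(\sum y^*y\bigr)$ and trace-like contractions, with coefficients of order $1/n^2$. The conditions \cref{eq:2.6} bound exactly these contractions by $I$. Thus the map $\Phi\mapsto(\mathbb E[\overline{w_{ij}}\Phi])_{ij}$ acts on the corrected candidates approximately as multiplication by an operator sandwiched between $\sum yy^*$ and $\sum y^*y$. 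We would then absorb these sandwiches by a two-sided rescaling $y_{ij}\mapsto a^{-1/2}y_{ij}b^{-1/2}$, using $a=\sum yy^*\le I$ and $b=\sum y^*y\le I$ to keep the operator norm bounded.

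The main obstacle is obtaining the exact constant $\sqrt{2n}$, rather than some $O(\sqrt n)$. Our first attempt would be to work in the large-$n$ picture, where $\sqrt n\,w$ is approximately a circular element and $\sqrt n A$ behaves like an $R\cap C$-normalized semicircular/circular sum. For that limit the sharp $L_1$ Khintchine constant $\sqrt2$ is known from \cite{HM07}: there one uses $\Phi=h(\hat A)$ with $h$ the truncation $h(t)=\mathrm{sign}(t)\min(|t|,\sqrt2)$-type function whose first free-Fourier coefficient is computable. We would then transfer the same functional calculus to finite $n$. This requires checking, via Weingarten asymptotics or a direct CAR/Clifford representation of the relevant moments, that the degree-one coefficient of $h(\hat A)$ dominates the finite-$n$ correction terms uniformly. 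If uniformity fails, the fallback is an averaging or symmetrization argument over $U(n)\times U(n)$ acting on rows and columns. This reduces the general $y$ to the extremal case $\sum yy^*=\sum y^*y=I$, where the synthesis can be done explicitly.
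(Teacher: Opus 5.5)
Your duality reduction is the same as the paper's (\cref{thm:fourier-interpolation} plus H\"older). But the construction of $\Phi$, which is the whole content, has a genuine gap. You never obtain an estimate controlling what remains after correcting $nA$ by higher-degree terms, and your heuristic for the degree-one projection of $AA^*A$ is wrong. The cross contractions in its Weingarten expansion, such as $\sum_{c,d}y_{id}y_{cd}^*y_{cj}$, carry an extra factor $1/n$. However, they are not bounded by \eqref{eq:2.6} as you assert: they can be $n$ times larger than $y_{ij}$, so they are of the same order as the ``sandwich'' terms. Concretely, for $m=n$ and $y_{ij}=E_{ij}/\sqrt n$ one has $A(w)=w/\sqrt n$ and $AA^*A=A/n$, whereas the sandwich terms alone predict $2A/n$.

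The paper avoids exact matching of coefficients altogether. With $S=\sqrt n\sum_{ij}w_{ij}y_{ij}$, it proves the exact finite-$n$ operator inequalities $\mathbb E(S^*S)^2\le 2I$ and $\mathbb E(SS^*)^2\le 2I$. The key point there is that precisely those cross terms $N_1,N_2$ are positive and at least $\frac1n\bigl(\sum_{ij}y_{ij}^*y_{ij}\bigr)^2$. Combined with $(s-1/\sqrt2)_+\le s^2/(2\sqrt2)$ and Bessel's inequality, this shows that the normalized first-order coefficients of the excess $\rho_{1/\sqrt2}(S)$ have $L_\infty[R\cap C]$-norm at most $1/2$. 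One keeps $\sqrt n\operatorname{trunc}_{1/\sqrt2}(S)$, of norm at most $\sqrt{n/2}$, recurses on the excess, and sums $\sqrt{n/2}\sum_k 2^{-k}=\sqrt{2n}$. Nothing in your plan plays the role of this fourth-moment/Bessel contraction step.

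Your route to the exact constant also cannot work as stated. Transferring a free/circular-limit functional calculus to finite $n$ via Weingarten asymptotics yields at best $(\sqrt2+o(1))\sqrt n$. The theorem asserts exactly $\sqrt{2n}$ for every $n$, uniformly in $m$ and in $y$. Moreover, computing the first-order coefficients of a non-polynomial $h(\hat A)$ would require controlling Weingarten moments of all orders uniformly.

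The fallback fails too. The $U(n)\times U(n)$ action on the indices leaves $\sum_{ij}y_{ij}y_{ij}^*$ and $\sum_{ij}y_{ij}^*y_{ij}$ invariant, so averaging cannot reduce to the case where both equal $I$. The rescaling $y_{ij}\mapsto a^{-1/2}y_{ij}b^{-1/2}$ leaves the unit ball of $L_\infty[R\cap C]$: for $m=1$ it turns $\sum_{ij}\abs{y_{ij}}^2=s<1$ into $1/s$.

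What is missing is the idea of accepting a residual at each step, proving that it contracts by a factor $1/2$ in $L_\infty[R\cap C]$, and summing the resulting geometric series.
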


We first introduce some terminology.
Given a function $F: U(n) \to M_m$, we may consider its first-order Fourier coefficients\footnote{We use this terminology in the spirit of Fourier analysis on the Lie group $U(n)$, since $w_{ij}$ are matrix coefficients of the fundamental (defining) representation of $U(n)$.}:
\begin{equation}
    y_{ij} = \mathbb E_w \qty[\overline{w_{ij}} F(w)] \label{eq:3.5}.
\end{equation}
Conversely, a family $(y_{ij})_{i,j=1}^n\subset M_m$ can always be produced in this way.
For example, we may define a first-order Fourier polynomial,
\begin{equation}
  S(w) = \sqrt n \sum_{i,j=1}^n w_{ij}y_{ij}, \qquad w\in U(n), \label{eq:3.4}
\end{equation}
and then Schur orthogonality gives
\begin{equation}
      y_{ij} = \mathbb E_w \qty[\overline{w_{ij}} \sqrt n S(w)].
\end{equation}
However, such an $F(w)$ is not unique, as we may also include higher-order Fourier terms from other representations of $U(n)$:
\begin{equation}
  F(w) = \sqrt n S(w) + (\text{higher-order terms}).
\end{equation}

The key claim in this subsection is that, given $(y_{ij})_{i,j=1}^n\subset M_m$, there always exists an $F(w)$ with moderate $L_\infty(U(n);M_m)$ norm compared with $\norm{(y_{ij})}_{L_\infty[R\cap C]}$. 
More precisely, we claim:

\begin{theorem}\label{thm:fourier-interpolation}
  Let $(y_{ij})_{i,j=1}^n\subset M_m$ satisfy $ \|(y_{ij})\|_{L_\infty[R\cap C]}\le1 $.
There exists a measurable function $F:U(n)\to M_m$ such that
\begin{align}
  &\mathbb E_w \qty[\overline{w_{ij}}F(w)] = y_{ij}
  \qquad (1\le i,j\le n), \label{eq:3.7}\\
  &\norm{F(w)}_\infty 
  \le \sqrt{2n} \quad(\forall w\in U(n)). \label{eq:3.8}
\end{align}
\end{theorem}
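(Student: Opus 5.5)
The plan is to prove \cref{thm:fourier-interpolation} by an explicit construction of $F$, and then to obtain \cref{thm:khintchine} by duality against \cref{eq:2.5}. The construction uses that the constraints \cref{eq:3.7} only see the first-order Fourier component of $F$, and these are the only constraints besides the norm bound. Schur orthogonality gives $\mathbb E_w[\overline{w_{ij}}\,w_{kl}]=\delta_{ik}\delta_{jl}/n$. Every other matrix coefficient of an irreducible representation of $U(n)$, including constants, $\overline{w_{kl}}$, and higher tensor powers of $w$ and $\bar w$, is orthogonal to $\overline{w_{ij}}$. So $F$ may be any function of the form $nT(w)+H(w)$, where $T(w)=\sum_{ij}w_{ij}y_{ij}$ and $H$ has no component in the defining representation. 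The naive choice $H=0$ only gives $\norm{F(w)}_\infty\le n\cdot\sqrt n$, by Cauchy--Schwarz with $\sum_{ij}\abs{w_{ij}}^2=n$. The correction $H$ must therefore cancel roughly a factor $n$.

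I would first treat the one-sided problems, aiming for a bound $\sqrt n$ each. For the column condition $\sum_{ij}y_{ij}^*y_{ij}\le I$, I would form the column contraction $Y_C=\sum_{ij}e_i\otimes e_j\otimes y_{ij}:\mathbb C^m\to\mathbb C^n\otimes\mathbb C^n\otimes\mathbb C^m$. I would then seek $F_C(w)=\sqrt n\,\Omega^*(w\otimes I_n\otimes I_m)Y_C$, where $\Omega$ is a fixed operator built from the maximally entangled vector $\sum_k e_k\otimes e_k$. The pairing should be chosen so that $w$ acts on the first tensor leg while the second leg is traced against $\Omega$. Because $w\otimes I$ is unitary, this gives $\norm{F_C(w)}_\infty\le\sqrt n\,\norm{\Omega}\,\norm{Y_C}$. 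The normalization of $\Omega$ must be tuned so that the first Fourier coefficients come out to exactly $y_{ij}$. The loss of $\sqrt n$ relative to the naive $n$ comes from $\Omega$ having norm $\sqrt n$ rather than $n$ after normalization. The row case is symmetric, using $F_R(w)=\sqrt n\,Y_R^*(w\otimes I\otimes I)\Omega'$.

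For the intersection condition, the key observation is that the Fourier constraint is linear while the norm bound is sublinear. Splitting $y=\tfrac12 y+\tfrac12 y$ and taking $F=\tfrac12(F_R+F_C)$ would reproduce $y$ with $\norm{F}\le\sqrt n$. That is already too good to be true unless the one-sided constructions fail as stated, so I expect the honest version to be as follows. Each one-sided construction needs both a row and a column bound of a \emph{dilated} object. The intersection hypothesis is used through a Halmos-type unitary (or isometric) dilation of the block operator built from $(y_{ij})$, of the form
\[
\begin{pmatrix} Y_C & (I-Y_C Y_C^*)^{1/2}\\ (I-Y_C^*Y_C)^{1/2} & -Y_C^*\end{pmatrix}.
\]
The two blocks carrying $Y_C$ and $Y_R$ then contribute orthogonally. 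This yields $\norm{F(w)}\le\sqrt n\,(1+1)^{1/2}=\sqrt{2n}$ through a $2\times2$ block Cauchy--Schwarz, rather than a triangle inequality.

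The main obstacle is this middle step. One must verify that the extra terms produced by the dilation, which are compressions of $w\otimes I$ against defect operators, have no component in the defining representation, so that they do not disturb \cref{eq:3.7}. At the same time they must exactly cancel the excess norm. I would first try to put all the defect terms into the $\bar w$ and $w\otimes w\otimes\bar w$ isotypic components, where orthogonality to $\overline{w_{ij}}$ is automatic. I would then check the norm via $F(w)=A^*\,U(w)\,B$ with $U(w)$ unitary and $\norm{A}\norm{B}\le\sqrt{2n}$.
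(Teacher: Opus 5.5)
There is a genuine gap. Every ingredient you actually specify is linear in $w$, and no linear or affine choice can work. These ingredients are $F_C(w)=\sqrt n\,\Omega^*(w\otimes I\otimes I)Y_C$, $F_R$, and compressions of $w\otimes I$ against fixed defect operators of a Halmos dilation.

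\begin{itemize}
\item \textbf{Linear choices are forced.} If $F(w)=\sum_{ij}w_{ij}z_{ij}$, Schur orthogonality gives first-order coefficients $z_{ij}/n$. The constraints then force $z_{ij}=ny_{ij}$, so $F=nT$, the naive choice you already rejected, whatever the normalization of $\Omega$.
\item \textbf{Constant and $\bar w$ terms do not help.} The map $F\mapsto\frac1{2\pi}\int_0^{2\pi}e^{-i\theta}F(e^{i\theta}w)\,d\theta$ is contractive in sup norm, preserves all first-order coefficients, and kills such terms. Hence every such $F$ has $\sup_w\|F(w)\|_\infty\ge\sup_w\|nT(w)\|_\infty$.
\item \textbf{This is large even under the $R\cap C$ hypothesis.} The family $y_{ij}=\delta_{ij}I_m/\sqrt n$ satisfies both conditions, yet $nT(I)=n^{3/2}I$.
\item \textbf{The one-sided warm-up is impossible, not merely suspicious.} Impose only $\sum y_{ij}^*y_{ij}\le I$. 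Take $m=n^2$, $y_{ij}=e_1e_{(ij)}^*$ with $(e_{(ij)})$ an orthonormal basis and $e_1$ a fixed unit vector, and pair against $x_{ij}=y_{ij}$. Then $\sum w_{ij}x_{ij}$ has rank one and norm $\sqrt n$, so every admissible $F$ satisfies $n^2=\bigl|\mathbb E_w\tr\bigl(F(w)^*\sum_{i,j}w_{ij}x_{ij}\bigr)\bigr|\le\sqrt n\,\|F\|_\infty$. Thus $\|F\|_\infty\ge n^{3/2}$, and the linear choice is optimal for the column-only problem.
\item \textbf{The orthogonality claim for $w\otimes w\otimes\bar w$ is false.} That representation contains the defining one; for example $\sum_{k,l}w_{ij}w_{kl}\overline{w_{kl}}=n\,w_{ij}$.
\item \textbf{The form $F(w)=A^*U(w)B$ is no reduction.} Writing $F$ this way with $U(w)$ unitary and $\|A\|\|B\|\le\sqrt{2n}$ is vacuous, since any $F$ with $\|F\|_\infty\le\sqrt{2n}$ has this form pointwise via a Halmos dilation of $F(w)/\sqrt{2n}$.
\end{itemize}
So the real content, a bounded $F$ that is genuinely nonlinear in $w$, is never constructed.

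The paper's proof supplies exactly the two missing ingredients.
\begin{itemize}
\item \textbf{A nonlinear step.} With $S(w)=\sqrt n\sum w_{ij}y_{ij}$, it keeps the singular-value truncation $\sqrt n\,\operatorname{trunc}_{1/\sqrt2}(S(w))$.
\item \textbf{Quantitative control of the discarded excess.} The excess $\rho_{1/\sqrt2}(S)$ is controlled by the fourth-moment bounds $\mathbb E(S^*S)^2\le 2I$ and $\mathbb E(SS^*)^2\le 2I$, obtained from the Weingarten formula. This is where the row and column conditions enter jointly, through $A\le I$ and $P=\sum y_{ij}^*By_{ij}\le A$ with $B\le I$.
\end{itemize}
From $(s-1/\sqrt2)_+\le s^2/(2\sqrt2)$ one gets $\mathbb E\,\rho^*\rho\le\frac14 I$, and likewise for $\rho\rho^*$. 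Bessel's inequality then shows that the first-order coefficients of the excess have $L_\infty[R\cap C]$ norm at most $\frac12$. Iterating and summing the truncated pieces gives $\|F\|_\infty\le\sqrt{n/2}\sum_k2^{-k}=\sqrt{2n}$. Without a nonlinear step and an estimate of what it throws away, a dilation scheme built from compressions of $w\otimes I$ cannot reach $\sqrt{2n}$.
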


The proof is inspired by the truncation and iteration used by Haagerup and Musat in the proof of their own noncommutative Khintchine inequality \cite{HM07}. 
The main idea is that, although $\sqrt n S(w)$ may have a large $L_\infty(U(n);M_m)$ norm, we can truncate it, project the remainder onto the first-order Fourier space, and iterate.
Throughout the iteration, the function may not be a first-order Fourier polynomial anymore, yet the first-order Fourier coefficients are preserved.

For $\tau>0$, we define the singular-value truncation as follows.
Given $X\in M_m$, we take a singular value decomposition $X=\sum_r s_r u_rv_r^*$ and set
\begin{equation}
  \operatorname{trunc}_\tau(X)=\sum_r\min\{s_r,\tau\}u_rv_r^*,\qquad
  \rho_\tau(X)=X-\operatorname{trunc}_\tau(X)=\sum_r(s_r-\tau)_+u_rv_r^*. \label{eq:3.17}
\end{equation}
Thus, $\operatorname{trunc}_\tau$ caps each singular value at $\tau$, while $\rho_\tau$ records the excess above $\tau$.
Both maps are well-defined even if the singular-value decomposition is not unique.

\begin{proposition}\label{prop:fourth-moment}
  Let $(y_{ij})_{i,j=1}^n\subset M_m$ satisfy $ \|(y_{ij})\|_{L_\infty[R\cap C]}\le1 $.
Define $S(w)$ by \cref{eq:3.4}.
Then
\begin{equation}
  \mathbb E_w(S^*S)^2 \le 2I, \qquad \mathbb E_w(SS^*)^2 \le 2I. \label{eq:3.9}
\end{equation}
\end{proposition}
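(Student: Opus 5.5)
We prove the first inequality; the second follows by symmetry (apply the first to the family $(y_{ij}^*)$ with $w$ replaced by $\bar w$, or simply repeat the argument with the roles of $S$ and $S^*$ interchanged). The plan is to expand $(S^*S)^2$ as a fourfold sum and compute the Haar moments exactly by Weingarten calculus. We have
\begin{equation*}
  \mathbb E_w (S^*S)^2 = n^2\sum \mathbb E\bigl[\overline{w_{i_1j_1}}\,w_{i_2j_2}\,\overline{w_{i_3j_3}}\,w_{i_4j_4}\bigr]\, y_{i_1j_1}^*y_{i_2j_2}y_{i_3j_3}^*y_{i_4j_4},
\end{equation*}
where the sum runs over all indices. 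The degree-$(2,2)$ moment is
\begin{equation*}
  \sum_{\sigma,\pi\in S_2}\delta_{i,\sigma}\,\delta_{j,\pi}\,\mathrm{Wg}(\sigma\pi^{-1}),
\end{equation*}
with $\mathrm{Wg}(e)=1/(n^2-1)$ and $\mathrm{Wg}((12))=-1/(n(n^2-1))$. Here the unconjugated entries $2,4$ are paired with the conjugated entries $1,3$.

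This gives four terms. Two of them have $\sigma=\pi$: these are the "parallel" pairings, where the $(i_2,j_2)$ entry matches $(i_1,j_1)$ and $(i_4,j_4)$ matches $(i_3,j_3)$, or else $2\leftrightarrow 3$ and $4\leftrightarrow 1$. Each carries weight $n^2/(n^2-1)$. The other two are "crossed" pairings, where the rows match one way and the columns the other, each with weight $-n/(n^2-1)$. Writing $A=\sum_{ij}y_{ij}^*y_{ij}\le I$ and $B=\sum_{ij}y_{ij}y_{ij}^*\le I$, the first parallel term is $A^2\le I$ and the second is $\sum_{ij,kl}y_{ij}^*\,y_{kl}y_{kl}^*\,y_{ij}=\sum_{ij}y_{ij}^*By_{ij}\le A\le I$. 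So the parallel part is at most $\frac{n^2}{n^2-1}\cdot 2I$. The remaining task is to show that the crossed terms are at most $-\frac{2}{n^2-1}I$ in the operator order, or to handle them in some other way.

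The crossed terms have the form $T=\sum y_{ij}^*y_{il}\,y_{kl}^*y_{kj}$ (and a companion). In block-matrix language, set $Y=\sum E_{ij}\otimes y_{ij}$. Then these terms are partial contractions of $Y^*Y$-type products, i.e.\ partial-transpose contractions. This is where I expect the difficulty to lie, because the crossed terms are not signed and cannot simply be dropped.

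My first attempt is the following. Recognize the full sum as $\mathbb E(S^*S)^2$, which is positive, and observe that $\mathrm{Wg}$ combined with a positivity/Cauchy--Schwarz bound gives $\lVert T\rVert\le$ (something like) $\lVert A\rVert\lVert B\rVert$. That alone yields only about $2\frac{n^2+n}{n^2-1}=\frac{2n}{n-1}$, which is not good enough. The better route is to write the crossed term as $\sum_{j,l}Z_{jl}^*Z_{jl}$ with $Z_{jl}=\sum_k y_{kl}^*y_{kj}$, so that $T\ge0$ for one of the two crossed terms (the one of the form $\sum_{jl}(\sum_i y_{il}^*y_{ij})^*(\sum_k y_{kl}^*y_{kj})$, up to reordering). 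Since the crossed terms enter with a negative sign, a positive crossed term can be dropped. The other crossed term should likewise be a sum of squares, $\sum_{i,k}(\sum_j y_{kj}y_{ij}^*)^*(\cdots)$ in rearranged form. If both can be dropped, we are left with
\begin{equation*}
  \mathbb E(S^*S)^2\le \frac{n^2}{n^2-1}\bigl(A^2+\textstyle\sum y^*By\bigr).
\end{equation*}
This still leaves the factor $\frac{n^2}{n^2-1}>1$. To remove it, I would use the positivity of the crossed terms more quantitatively: bound the parallel part plus the crossed part jointly, using the identity that the crossed sums dominate $\frac1n$ times the corresponding parallel sums, via Cauchy--Schwarz on the $n$ summation indices. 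Then $\frac{n^2}{n^2-1}-\frac{n}{n^2-1}\cdot\frac1n=1$ exactly, giving the bound $2I$. Verifying this domination inequality cleanly is the step I regard as the crux.
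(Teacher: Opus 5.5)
You have the same approach as the paper: the Weingarten expansion gives $\mathbb E(S^*S)^2=\frac{n^2}{n^2-1}(A^2+P)-\frac{n}{n^2-1}(N_1+N_2)$ with $P=\sum_{i,j}y_{ij}^*By_{ij}$. You bound $A^2\le I$ and $P\le A\le I$, and you show that your crossed term $N_1=T=\sum_{j,l}C_{jl}C_{jl}^*$, with $C_{jl}=\sum_i y_{ij}^*y_{il}$, is positive. Two small points:
\begin{itemize}
\item The companion term is $N_2=\sum_{i,p}D_{ip}D_{ip}^*$ with $D_{ip}=\sum_j y_{ij}^*y_{pj}$. The inner products are of type $y^*y$, so the $yy^*$-form you guessed is not right.
\item The case $n=1$ must be handled separately, since $\mathrm{Wg}$ as written is undefined there.
\end{itemize}

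The real gap is the crux you left open. As you formulated it, it fails. Your arithmetic $\frac{n^2}{n^2-1}-\frac{n}{n^2-1}\cdot\frac1n=1$ requires each crossed term to dominate $\frac1n$ times ``its'' parallel term, so one of them would have to satisfy $N\ge\frac1n P$. That is false. Take $n=2$, $m=3$ with standard basis $e_0,e_1,e_2$, and set $y_{11}=\sqrt\eta\,e_0e_1^*$, $y_{22}=\sqrt{1-\eta}\,e_0e_2^*$, $y_{12}=y_{21}=0$. Then $B=e_0e_0^*$, $P=A=\eta e_1e_1^*+(1-\eta)e_2e_2^*$, and $N_1=N_2=A^2$. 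In the direction $e_1$ the crossed terms equal $\eta^2$, which is less than $\eta/2=\frac12\langle e_1,Pe_1\rangle$ whenever $\eta<1/2$. Even the joint version $N_1+N_2\ge\frac1n(A^2+P)$ fails for $\eta<1/3$.

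What is true, and what the paper uses, is that \emph{both} crossed terms dominate $\frac1nA^2$. Drop the off-diagonal (positive) terms and use operator convexity of $t\mapsto t^2$:
\[
N_1\ge\sum_jC_{jj}^2\ge\frac1n\Bigl(\sum_jC_{jj}\Bigr)^2=\frac1nA^2,
\qquad
N_2\ge\sum_iD_{ii}^2\ge\frac1nA^2,
\]
where the second bound uses $\sum_iD_{ii}=A$. The two coefficients then do not each become $1$. Instead,
\[
\mathbb E(S^*S)^2\le\frac{(n^2-2)A^2+n^2P}{n^2-1}\le\frac{2n^2-2}{n^2-1}I=2I.
\]
This works because $A^2$ and $P$ are each bounded by $I$, so only the total coefficient matters. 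With this correction your argument becomes the paper's proof.
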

\begin{proof}
  We may assume $n\ge 2$ since the $n=1$ case is trivial. 
It suffices to prove the first inequality. The second follows by applying the same argument to the adjoint family $(y_{ij}^*)$.

Using \cref{eq:3.4}, we expand $(S^*S)^2$ as:
\begin{equation}
  (S^*S)^2
  =
  n^2\sum_{i,j,k,l,p,q,r,s}
  \overline{w_{ij}}w_{kl}\overline{w_{pq}}w_{rs}
  y_{ij}^*y_{kl}y_{pq}^*y_{rs}.
\end{equation}
We use the Haar moment formula \cite{Collins03,CS06}
\begin{equation}
\begin{aligned}
    \mathbb E\qty[\overline{w_{ij}}w_{kl}\overline{w_{pq}}w_{rs}]
    =
    \frac{\delta_{ki}\delta_{rp}\delta_{lj}\delta_{sq}
    +\delta_{kp}\delta_{ri}\delta_{lq}\delta_{sj}}{n^2-1} 
    -
    \frac{\delta_{ki}\delta_{rp}\delta_{lq}\delta_{sj}
    +\delta_{kp}\delta_{ri}\delta_{lj}\delta_{sq}}{n(n^2-1)}.
\end{aligned}
  \label{eq:3.10}
\end{equation}
This gives:
\begin{equation}
  \mathbb E(S^*S)^2 = \frac{n^2}{n^2-1}(A^2+P) - \frac{n}{n^2-1}(N_1+N_2). \label{eq:3.11}
\end{equation}
Here,
\begin{align}
  A=\sum_{i,j}y_{ij}^*y_{ij}&,\qquad
  B=\sum_{i,j}y_{ij}y_{ij}^*,\qquad
  P=\sum_{i,j}y_{ij}^*By_{ij},\\
  N_1 &= \sum_{j,q} \qty(\sum_i y_{ij}^*y_{iq})
  \qty(\sum_p y_{pq}^*y_{pj}), \label{eq:3.12}\\
  N_2 &= \sum_{i,p} \qty(\sum_j y_{ij}^*y_{pj})
  \qty(\sum_q y_{pq}^*y_{iq}). \label{eq:3.13}
\end{align}
Recall \cref{eq:2.6}, so we have $A\le I$ and $B\le I$.
This implies $A^2\le I$ and 
\begin{equation}
  P \le \sum_{i,j}y_{ij}^*y_{ij} = A \le I. \label{eq:3.14}
\end{equation}

We next show that $N_1$ and $N_2$ are positive and bounded below by $\frac1nA^2$.
Define $C_{jq}=\sum_i y_{ij}^*y_{iq}$. 
Since $C_{qj}=C_{jq}^*$, we have
\begin{equation}
  N_1=\sum_{j,q}C_{jq}C_{jq}^*\ge0.
\end{equation}
Keeping only the terms with $j=q$ and using operator convexity of $t\mapsto t^2$, we obtain
\begin{equation}
  N_1\ge\sum_jC_{jj}^2\ge\frac1n\qty(\sum_jC_{jj})^2=\frac1nA^2. \label{eq:3.15}
\end{equation}
The same argument gives
\begin{equation}
  N_2\ge\frac1nA^2. \label{eq:3.16}
\end{equation}

Substituting \crefrange{eq:3.14}{eq:3.16} into \cref{eq:3.11}, we obtain
\begin{equation}
  \mathbb E(S^*S)^2
  \le \frac{(n^2-2)A^2+n^2P}{n^2-1}
  \le \frac{2n^2-2}{n^2-1}I
  =2I.
  \label{eq:fourth-moment-bound}
\end{equation}
This proves the proposition.
\end{proof}

\begin{proposition}\label{prop:truncation}
  Let $(y_{ij})_{i,j=1}^n\subset M_m$ satisfy $ \|(y_{ij})\|_{L_\infty[R\cap C]}\le1 $.
Define $S(w)$ by \cref{eq:3.4}, and define
\begin{equation}
  y_{ij}^{(1)} = \sqrt n\, \mathbb E_w \qty[\overline{w_{ij}} \rho_{\frac{1}{\sqrt2}}(S(w))]. \label{eq:3.18}
\end{equation}
Then
\begin{equation}
  \norm{(y_{ij}^{(1)})}_{L_\infty[R\cap C]} \le \frac12. \label{eq:3.19}
\end{equation}
\end{proposition}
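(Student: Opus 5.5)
The plan is to bound the column and row square functions of $(y_{ij}^{(1)})$ separately. For the column side, I would combine an operator-valued Bessel inequality with a pointwise scalar inequality for the truncation remainder, and then feed in the fourth-moment bound of \cref{prop:fourth-moment}. By \cref{eq:2.6}, it suffices to show
\begin{equation*}
\sum_{i,j}\bigl(y^{(1)}_{ij}\bigr)^*y^{(1)}_{ij}\le \tfrac14 I,\qquad \sum_{i,j}y^{(1)}_{ij}\bigl(y^{(1)}_{ij}\bigr)^*\le \tfrac14 I .
\end{equation*}

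\emph{Step 1 (Bessel).} Write $R(w)=\rho_{1/\sqrt2}(S(w))$. By Schur orthogonality the scalar functions $\sqrt n\,w_{ij}$ are orthonormal in $L_2(U(n))$. Hence expanding
\begin{equation*}
0\le \mathbb E_w\Bigl[\Bigl(R-\sqrt n\sum_{i,j}w_{ij}y^{(1)}_{ij}\Bigr)^*\Bigl(R-\sqrt n\sum_{i,j}w_{ij}y^{(1)}_{ij}\Bigr)\Bigr]
\end{equation*}
and using \cref{eq:3.18} for the cross terms gives $\mathbb E_w[R^*R]-\sum_{i,j}(y^{(1)}_{ij})^*y^{(1)}_{ij}\ge 0$. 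The same expansion with $RR^*$ gives $\sum_{i,j}y^{(1)}_{ij}(y^{(1)}_{ij})^*\le \mathbb E_w[RR^*]$.

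\emph{Step 2 (functional calculus).} From the singular value decomposition in \cref{eq:3.17}, $R^*R=f(S^*S)$ and $RR^*=f(SS^*)$, where $f(t)=(\sqrt t-\tau)_+^2$ and $\tau=1/\sqrt2$. I then claim the scalar inequality $f(t)\le t^2/8$ for all $t\ge0$, i.e. $(s-\tau)_+^2\le s^4/8$ for $s\ge0$.
\begin{itemize}
\item It is trivial when $s\le\tau$.
\item Otherwise it is equivalent to $s-\tfrac1{\sqrt2}\le \tfrac{s^2}{2\sqrt2}$, and
\begin{equation*}
\frac{s^2}{2\sqrt2}-s+\frac1{\sqrt2}=\frac{(s-\sqrt2)^2}{2\sqrt2}\ge 0 .
\end{equation*}
\end{itemize}
Since $f(S^*S)$ and $(S^*S)^2$ are functions of the same self-adjoint operator, the scalar inequality transfers: $R^*R\le \tfrac18 (S^*S)^2$ pointwise in $w$, and likewise $RR^*\le\tfrac18(SS^*)^2$.

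\emph{Step 3.} Taking expectations and invoking \cref{prop:fourth-moment} gives $\mathbb E_w R^*R\le \tfrac18\cdot 2I=\tfrac14 I$, and the same for $RR^*$. Combining with Step 1 yields both inequalities above, hence \cref{eq:3.19}.

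The only delicate point is finding the right scalar majorant in Step 2. The threshold $1/\sqrt2$ is exactly what makes $(s-\tau)_+^2\le s^4/8$ hold, with equality at $s=\sqrt2$. Any other comparison, such as one using only second moments, would lose the constant.
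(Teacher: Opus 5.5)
Your proposal is correct and follows essentially the same route as the paper: the pointwise scalar majorant $(s-\tfrac1{\sqrt2})_+\le \tfrac{s^2}{2\sqrt2}$, transferred by functional calculus and combined with \cref{prop:fourth-moment}, gives $\mathbb E_w R^*R\le\tfrac14 I$ and $\mathbb E_w RR^*\le\tfrac14 I$. Bessel's inequality for the orthonormal family $\sqrt n\,w_{ij}$ then bounds both square functions of $(y^{(1)}_{ij})$. You spell out the Bessel expansion and the completed square $(s-\sqrt2)^2/(2\sqrt2)$ explicitly, whereas the paper leaves both implicit.
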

\begin{proof}
For $s\ge0$, one has $\qty(s-\frac1{\sqrt2})_+\le \frac{s^2}{2\sqrt2}$. 
Functional calculus and \Cref{prop:fourth-moment} therefore give
\begin{equation}
\begin{aligned}
    \mathbb E_w\,\rho_{\frac{1}{\sqrt2}}(S(w))^*\rho_{\frac{1}{\sqrt2}}(S(w))
    =\mathbb E_w\qty(\abs{S(w)}-\frac1{\sqrt2}I)_+^2 
    \le \frac18\,\mathbb E_w\qty(S(w)^*S(w))^2\le\frac14I.
\end{aligned}
  \label{eq:3.20}
\end{equation}
The functions $\sqrt n\,w_{ij}$ are orthonormal in $L_2(U(n))$. Bessel's inequality yields
\begin{equation}
  \sum_{i,j}(y_{ij}^{(1)})^*y_{ij}^{(1)} \le \mathbb E_w\, \rho_{\frac{1}{\sqrt2}}(S(w))^* \rho_{\frac{1}{\sqrt2}}(S(w)) \le \frac14I. \label{eq:3.21}
\end{equation}
Applying the same argument to adjoints gives
\begin{equation}
  \sum_{i,j}y_{ij}^{(1)}(y_{ij}^{(1)})^* \le \mathbb E_w\,\rho_{\frac{1}{\sqrt2}}(S(w))\rho_{\frac{1}{\sqrt2}}(S(w))^* \le \frac14I. \label{eq:3.22}
\end{equation}
\Cref{eq:3.21,eq:3.22} prove \cref{eq:3.19}.
\end{proof}

\begin{proof}[Proof of \Cref{thm:fourier-interpolation}]

  We will iteratively define two families of functions $S_k: U(n)\to M_m$ and $F_k: U(n)\to M_m$, indexed by $k\in\mathbb N$. 

To start, we define $S_0(w)=\sqrt n\sum_{i,j}w_{ij}y_{ij}$.
At each step, once we have $S_k$, we define
\begin{equation}\label{eq:ykfromSk}
  y_{ij}^{(k)}=\sqrt n\,\mathbb E_w\qty[\overline{w_{ij}}S_k(w)],
  \qquad
  r_k=\norm{(y_{ij}^{(k)})}_{L_\infty[R\cap C]}.
\end{equation}
If $r_k=0$, set all later terms equal to zero.
Otherwise, decompose $S_k(w)$ pointwise as
\begin{equation}\label{eq:decomposeSk}
  S_k(w)
  =
  \operatorname{trunc}_{\frac{r_k}{\sqrt2}}(S_k(w))
  +
  \rho_{\frac{r_k}{\sqrt2}}(S_k(w)).
\end{equation}
We keep the truncated part to define $F_k$, and use the first-order coefficients of the excess part to define the next polynomial $S_{k+1}$:
\begin{equation}
  F_k(w) = \sqrt n\, \operatorname{trunc}_{\frac{r_k}{\sqrt2}}(S_k(w)), \label{eq:3.23}
\end{equation}
\begin{equation}
\begin{aligned}
  y_{ij}^{(k+1)}
  =
  \sqrt n\,\mathbb E_w\qty[\overline{w_{ij}}\rho_{\frac{r_k}{\sqrt2}}(S_k(w))], \qquad
  S_{k+1}(w)
  =
  \sqrt n\sum_{i,j}w_{ij}y_{ij}^{(k+1)}.
\end{aligned}
  \label{eq:3.24}
\end{equation}

Once we have all the $S_k$ and $F_k$, our construction of $F(w)$ will be
\begin{equation}
  F(w)=\sum_{k=0}^\infty F_k(w).
\end{equation}
In the following, we prove that it converges and satisfies the claims of \cref{thm:fourier-interpolation}.

Schur orthogonality gives $y_{ij}^{(0)}=y_{ij}$, so $r_0\le1$.
By homogeneity, \cref{prop:truncation} applied to $(y_{ij}^{(k)}/r_k)$ gives
\begin{equation}
  r_{k+1} \le \frac12r_k, \label{eq:3.26}
\end{equation}
hence $r_k\le2^{-k}$.
Moreover, by definition of the truncation, we have:
\begin{equation}
  \norm{F_k(w)}_\infty \le \sqrt{\frac n2}\,r_k \qquad \forall w\in U(n). \label{eq:3.27}
\end{equation}
Therefore, $F=\sum_{k=0}^\infty F_k$ converges uniformly and satisfies
\begin{equation}
  \norm{F}_\infty
  \le
  \sum_{k=0}^\infty\norm{F_k}_\infty
  \le
  \sqrt{\frac n2}\sum_{k=0}^\infty2^{-k}
  =
  \sqrt{2n}. \label{eq:3.28}
\end{equation}

The definitions \cref{eq:ykfromSk,eq:decomposeSk,eq:3.23,eq:3.24} imply that the first-order coefficients satisfy:
\begin{equation}
  y_{ij}^{(k)} = \mathbb E_w \qty[\overline{w_{ij}}F_k(w)] + y_{ij}^{(k+1)}. \label{eq:3.25}
\end{equation}
Iterating this equation over $k$ and using that $\norm{y_{ij}^{(k)}}_\infty\leq  r_{k}\to 0$ and $\sum_kF_k$ converges uniformly, we obtain
\begin{equation}
  y_{ij}
  =
  \sum_{k=0}^\infty\mathbb E_w\qty[\overline{w_{ij}}F_k(w)]
  =
  \mathbb E_w\qty[\overline{w_{ij}}F(w)].
\end{equation}
This proves \cref{eq:3.7,eq:3.8}.
\end{proof}

The dual form of \cref{thm:fourier-interpolation} proves the desired \cref{thm:khintchine}.

\begin{proof}[Proof of \cref{thm:khintchine}]
Fix $(y_{ij})$ in the unit ball of $L_\infty[R\cap C]$. By \cref{thm:fourier-interpolation}, there exists $F:U(n)\to M_m$ such that $\mathbb E_w[\overline{w_{ij}}F(w)]=y_{ij}$ and $\norm{F}_\infty\le\sqrt{2n}$. Therefore,
\begin{equation}
\begin{aligned}
    \biggl\lvert\sum_{i,j}\tr(y_{ij}^*x_{ij})\biggr\rvert
    =\biggl\lvert\mathbb E_w\tr\Bigl(F(w)^*\sum_{i,j}w_{ij}x_{ij}\Bigr)\biggr\rvert
    \le\sqrt{2n}\,\mathbb E_w\biggl\lVert\sum_{i,j}w_{ij}x_{ij}\biggr\rVert_1.
\end{aligned}
  \label{eq:khintchine-duality}
\end{equation}
Taking the supremum over $(y_{ij})$ and using the duality \cref{eq:2.5} proves the theorem.
\end{proof}

For completeness, we also give the reverse estimate.
\begin{proposition}
  For every family $(x_{ij})\subset M_m$ with $1\leq i,j\leq n$, one has:
\begin{equation}
  \mathbb E_{w\in U(n)}\norm{\sum_{i,j=1}^n w_{ij}x_{ij}}_1
  \leq \frac{1}{\sqrt n}\,\norm{(x_{ij})}_{L_1[R+C]}.
  \label{eq:reverse-khintchine}
\end{equation}
This estimate is sharp.
\end{proposition}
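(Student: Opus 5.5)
The plan is to first reduce to the two pure cases. Given any decomposition $x_{ij}=a_{ij}+b_{ij}$, the triangle inequality gives
\begin{equation*}
  \mathbb E_w\norm{\sum_{i,j} w_{ij}x_{ij}}_1
  \le \mathbb E_w\norm{\sum_{i,j} w_{ij}a_{ij}}_1 + \mathbb E_w\norm{\sum_{i,j} w_{ij}b_{ij}}_1 .
\end{equation*}
It therefore suffices to prove the column estimate $\mathbb E_w\norm{\sum w_{ij}b_{ij}}_1\le n^{-1/2}\norm{(b_{ij})}_{L_1[C]}$ together with the analogous row estimate; taking the infimum over decompositions then yields \cref{eq:reverse-khintchine}. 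The row case follows from the column case applied to the adjoint family, using $\norm{X^*}_1=\norm{X}_1$ and the fact that $\overline{w}$ is again Haar distributed.

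For the column estimate, set $G(w)=\sum_{i,j}w_{ij}b_{ij}$ and $A=\sum_{i,j}b_{ij}^*b_{ij}$. By Schur orthogonality, $\mathbb E_w G^*G=\frac1n A$. The key step is the inequality
\begin{equation*}
  \mathbb E\norm{G}_1=\mathbb E\tr (G^*G)^{1/2}\le \tr\bigl(\mathbb E\, G^*G\bigr)^{1/2},
\end{equation*}
which follows from operator concavity of $t\mapsto t^{1/2}$: one has $\mathbb E (G^*G)^{1/2}\le (\mathbb E G^*G)^{1/2}$ in the operator sense, and taking traces preserves this. (Alternatively, apply Hölder with weight $(\mathbb E G^*G)^{-1/2}$.) The right-hand side is then $\tr(A/n)^{1/2}=n^{-1/2}\norm{(b_{ij})}_{L_1[C]}$. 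The only technical point is applying the operator Jensen inequality to an average over a probability measure, which is standard; alternatively one can write $\tr|G|=\sup_{\norm{u}\le1}\abs{\tr(uG)}$ and use the Cauchy–Schwarz inequality in $L_2(U(n);S_2^m)$.

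For sharpness I would exhibit explicit examples attaining equality. The first candidate is $m=1$ with $x_{11}=1$ and all other $x_{ij}=0$. Then the left side is $\mathbb E\abs{w_{11}}$, while $\norm{x}_{L_1[R+C]}=1$, so this family is not sharp: $\mathbb E\abs{w_{11}}<(\mathbb E\abs{w_{11}}^2)^{1/2}=n^{-1/2}$. Equality in Jensen requires $G^*G$ to be deterministic, which suggests taking $m=n$ and $x_{ij}=E_{ji}$. Then $G(w)=w^T$ is unitary, so $\norm{G}_1=n$ for every $w$. On the other side, $\sum_{ij}x_{ij}^*x_{ij}=nI$, giving $\norm{x}_{L_1[C]}=n\cdot\sqrt n$, and the reverse estimate reads $n\le n^{-1/2}\cdot n^{3/2}=n$, which is equality. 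It remains to check that $\norm{x}_{L_1[R+C]}$ is not smaller than $n^{3/2}$; this is the main obstacle. To handle it I would use the duality \cref{eq:2.5}: take $y_{ij}=x_{ij}/\sqrt n$, which satisfies $\sum y^*y=\sum yy^*=I$. Then $\sum_{ij}\tr(y_{ij}^*x_{ij})=\frac{1}{\sqrt n}\sum_{ij}\tr(E_{ii})=n^{2}/\sqrt n=n^{3/2}$. Hence the $R+C$ norm equals $n^{3/2}$, and equality holds in \cref{eq:reverse-khintchine} for every $n$.
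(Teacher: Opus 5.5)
Your proof is correct and follows essentially the same route as the paper: you reduce via the triangle inequality to the pure row and column cases, bound each by operator Jensen for $t\mapsto t^{1/2}$ together with $\mathbb E_w G^*G=\frac1n\sum_{i,j}b_{ij}^*b_{ij}$, and witness sharpness with $m=n$ and matrix units (the paper takes $x_{ij}=E_{ij}$ so that the sum is $w$ itself, rather than $w^T$). Your dual element $y_{ij}=x_{ij}/\sqrt n$ is a valid check that $\norm{(x_{ij})}_{L_1[R+C]}=n^{3/2}$, but this is not really an obstacle: since the left side equals $n$, the inequality you just proved already forces $\norm{(x_{ij})}_{L_1[R+C]}\ge\sqrt n\cdot n$, which is how the paper implicitly concludes.
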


\begin{proof}
Fix a decomposition $x_{ij}=a_{ij}+b_{ij}$ and write $A(w)=\sum_{i,j}w_{ij}a_{ij}$ and $B(w)=\sum_{i,j}w_{ij}b_{ij}$. By Jensen's inequality for the operator-concave function $t\mapsto t^{1/2}$ and Haar orthogonality,
\begin{align}
  \mathbb E_w\norm{A(w)}_1
  &\leq \tr\qty(\mathbb E_w A(w)A(w)^*)^{1/2}
  =\frac{1}{\sqrt n}\,\norm{(a_{ij})}_{L_1[R]}, \\
  \mathbb E_w\norm{B(w)}_1
  &\leq \tr\qty(\mathbb E_w B(w)^*B(w))^{1/2}
  =\frac{1}{\sqrt n}\,\norm{(b_{ij})}_{L_1[C]}.
\end{align}
The triangle inequality and the infimum over all such decompositions prove \cref{eq:reverse-khintchine}.

To see that the bound is sharp, take $m=n$ and $x_{ij}=E_{ij}$. Then $\sum_{i,j}w_{ij}x_{ij}=w$, so the l.h.s of \cref{eq:reverse-khintchine} is $n$. On the other hand, 
\begin{equation}
  \norm{(E_{ij})}_{L_1[R+C]}
  \leq \norm{(E_{ij})}_{L_1[R]}
  =\biggl\lVert\biggl(\sum_{i,j}E_{ij}E_{ij}^*\biggr)^{1/2}\biggr\rVert_1
  =n\sqrt n.
\end{equation}
This shows that the bound \cref{eq:reverse-khintchine} is sharp.
\end{proof}

\subsection{Final upper bound and Lean formalization}
Combining \cref{prop:block-matrix,thm:khintchine}, we obtain
\begin{equation}
\begin{aligned}
    \norm{z}_1
    \le\sqrt n\,\norm{(x_{ij})}_{L_1[R+C]} 
    \le\sqrt2\,n\,\mathbb E_w\norm{T_z(w)}_1 
    \le\sqrt2\,n\,\norm{z}_\varepsilon.
\end{aligned}
  \label{eq:3.30}
\end{equation}
This concludes the first part of the main result, \cref{thm:main} (recall that we have assumed $n\le m$ without loss of generality).

The complete upper-bound argument has also been formalized and machine-checked in Lean 4. The source code is available in the accompanying repository \cite{LeanFormalization}. 
The upper bound in \cref{thm:main} corresponds to the declaration \texttt{UpperBound.upper\_bound}. The formalization includes the block-matrix estimate \cref{prop:block-matrix}, the noncommutative Khintchine inequality \cref{thm:khintchine}, and their combination in \cref{eq:3.30}.

\section{The lower bound}\label{sec:lower-bound}

In this section, we prove that the constant $\sqrt 2$ is optimal, in the sense that there exists a family $z^{(n)}\in M_n\otimes M_{D_n}$ ($D_n\geq n$), such that:
\begin{equation}\label{eq:saturate}
  \lim_{n\to\infty} \frac{\norm{z^{(n)}}_1}{n\norm{z^{(n)}}_\varepsilon} = \sqrt{2}.
\end{equation}

\subsection{The CAR construction}

Our construction is based on the canonical anticommutation relations (CAR), which have proved useful in constructing extremal examples in operator space theory. 
Notably, Haagerup and Itoh used a CAR construction to establish the sharpness of the constant \(2\) in the noncommutative Grothendieck inequality \cite{HI95}; related CAR constructions underlie Haagerup and Musat's determination of sharp constants in their noncommutative Khintchine-type inequalities \cite{HM07}.
Our additional step is to pass to the partial-isometry part of the CAR matrix, thereby flattening all of its nonzero singular values.

Let $\mathcal F_n=\Lambda(\mathbb C^{n^2})$ be the Fock space of $n^2$ complex fermionic modes, indexed by pairs $(i,j)$ with $1\le i,j\le n$. Its dimension is $D_n=2^{n^2}$.
Denote by $c_{ij}$ the annihilation operator associated with the mode $(i,j)$. 
The canonical anticommutation relations are
\begin{equation}
  \acomm{c_{ia}}{c_{kb}}=0, \qquad \acomm{c_{ia}}{c_{kb}^*} = \delta_{ik}\delta_{ab}I. \label{eq:4.1}
\end{equation}

Define (we omit the $n$-dependence in the notation)
\begin{align}
  C = \sum_{i,j=1}^n E_{ij}\otimes c_{ij}
  \in M_n\otimes M_{D_n}, \label{eq:4.3}
\end{align}
and polar decompose it as
\begin{equation}
  C=V\abs{C},
  \label{eq:polar-decomposition}
\end{equation}
such that $V$ is the unique partial isometry satisfying \cref{eq:polar-decomposition} and $\ker V=\ker C$.
$V$ is a singular-value-flattened version of $C$ in which all nonzero singular values are set to 1.

We will use $V$ as a witness to certify \cref{eq:saturate}.

\subsection{Injective tensor norm estimate}

We next estimate the injective norm of $V$.
Write
\begin{equation}
  V=\sum_{i,j=1}^n E_{ij}\otimes x_{ij},\qquad T_{V}(a)=\sum_{i,j=1}^na_{ij} x_{ij},\qquad (a\in M_n).
\end{equation}
By \cref{eq:2.2}, it suffices to control $\norm{T_V(w)}_1$ for $w\in U(n)$.

Suppose we do not perform polar decomposition and consider $T_C(a)$ instead. Then it is a linear combination of annihilation operators and is therefore square-zero: $T_C(a)^2=0$. 
The next proposition shows that this still holds for $T_V(a)$.
In fact, we prove a stronger statement.

\begin{proposition}\label{prop:square-zero}
  For any one-variable polynomial $q$, define $C_q=C q(C^*C)$.  
  Define $T_q: M_n \to M_{D_n}$ by
\begin{equation}
  T_q(a) = \sum_{i,j=1}^n a_{ij} y_{ij},\qquad C_q=\sum_{i,j=1}^n E_{ij}\otimes y_{ij}.
\end{equation}
Then, for every $a\in M_n$,
\begin{equation}
  T_q(a)^2 = 0. \label{eq:4.5}
\end{equation}
\end{proposition}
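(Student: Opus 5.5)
The plan is to write the coefficients $y_{ij}$ explicitly in terms of the CAR generators, pull out one annihilation operator, and reduce $T_q(a)^2=0$ to fermionic antisymmetry ($c_{\alpha}c_{\beta}=-c_{\beta}c_{\alpha}$, in particular $c_\alpha^2=0$). By linearity in $q$ it suffices to treat $q(t)=t^k$.

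First compute $C^*C=\sum_{j,l}E_{jl}\otimes G_{jl}$ with $G_{jl}=\sum_i c_{ij}^*c_{il}$. Hence $(C^*C)^k=\sum_{j,l}E_{jl}\otimes H^{(k)}_{jl}$, where $H^{(k)}_{jl}=\sum G_{jj_1}G_{j_1j_2}\cdots G_{j_{k-1}l}$. Consequently
\begin{equation*}
  y_{ij}=\sum_l c_{il}H^{(k)}_{lj},\qquad
  T_q(a)=\sum_{i,j,l}a_{ij}\,c_{il}H^{(k)}_{lj}.
\end{equation*}
The row index $i$ enters only through the single leading annihilator $c_{il}$. The $G$'s sum over the row index, so they are invariant under unitary rotations of the first index.

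The key algebraic input is the commutation relation
\begin{equation*}
  [G_{jl},c_{pq}]=-\delta_{jq}\,c_{pl}.
\end{equation*}
This follows directly from \cref{eq:4.1}. It says the $G_{jl}$ generate a $\mathfrak{gl}(n)$-action on the column index, and the $c_{p\,\cdot}$ transform as a (dual) vector for each fixed $p$, without touching $p$. Using it, I will move the factor $H^{(k)}_{lj}$ of the first copy of $T_q(a)$ past the leading annihilator $c_{pr}$ of the second copy. This rewrites $T_q(a)^2$ as
\begin{equation*}
  T_q(a)^2=\sum_{i,p}\sum_{l,s} c_{il}\,c_{ps}\,\Psi_{ls}^{(i,p)},
\end{equation*}
where $\Psi_{ls}^{(i,p)}$ are polynomials in the $G$'s, assembled from the matrix entries of $a$ together with $H$-type words. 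The same commutation relation lets me compute $\Psi$ symbolically. I will then aim to show that the coefficient multiplying $c_{il}c_{ps}$ is symmetric under the swap $(i,l)\leftrightarrow(p,s)$, once the $G$-polynomials (which commute among themselves only up to the Lie algebra relations) are placed in a canonical order to the right. Antisymmetry of $c_{il}c_{ps}$ then kills the sum. For $k=0$ this is exactly $\bigl(\sum a_{ij}c_{ij}\bigr)^2=0$, which serves as the sanity check.

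A cleaner route, which I will try first, is to identify $T_q(a)$ as $\sum_l c(f_l)\,\Phi_l$. Here $f_l=\sum_i \bar\phi_{il}e_{il}$ for suitable row vectors, and each $\Phi_l$ lies in the algebra generated by the $G$'s. I would then show that $\Phi_l$ intertwines so that $T_q(a)=\Phi'\,c(f)\,\Phi''$ with $c(f)$ a single annihilator and $\Phi''\Phi'$ commuting with $c(f)$ up to terms that again contain $c(f)$. Then $T_q(a)^2=\Phi' c(f)(\Phi''\Phi')c(f)\Phi''$ reduces to $\Phi' X c(f)^2\Phi''=0$ for some $X$.

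The main obstacle is the bookkeeping in the symmetry step. Commuting $H^{(k)}$ past $c_{pr}$ produces many lower-order terms in which the column index of $c_{pr}$ is shifted, and I must check that these reassemble into a symmetric expression. I expect to handle this by induction on $k$, using $C_{t^{k+1}}=C_{t^k}\,(C^*C)$ and the $\mathfrak{gl}(n)$-covariance to propagate the identity $T_{t^k}(a)\,T_{t^k}(a')=-T_{t^k}(a')\,T_{t^k}(a)$ (the polarized form of \cref{eq:4.5}). I would prove this polarized form as the inductive hypothesis, since it is what the induction actually needs.
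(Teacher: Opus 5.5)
\textbf{Gap: the reduction to monomials.} Your step ``by linearity in $q$ it suffices to treat $q(t)=t^k$'' is not valid. $T_q(a)$ is linear in $q$, but $T_q(a)^2$ is quadratic in $q$. For $q=\sum_k\beta_k t^k$,
\[
T_q(a)^2=\sum_k\beta_k^2\,T_{t^k}(a)^2+\sum_{j<k}\beta_j\beta_k\,\bigl\{T_{t^j}(a),T_{t^k}(a)\bigr\},
\]
and the monomial case says nothing about the mixed anticommutators. Your proposed inductive hypothesis (anticommutation of $T_{t^k}(a)$ with $T_{t^k}(a')$ for a single $k$) does not reach them either.

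Worse, the cross-degree analogue polarized in $a$ is false. Take $X=C$ and $Y=C(C^*C)$. The commutation relation you wrote down gives $\{Y_{ij},X_{kl}\}=-c_{il}c_{kj}$, so
\[
\{T_{t^1}(E_{11}),T_{t^0}(E_{22})\}=-c_{12}c_{21}\neq0 .
\]
Only the symmetrized combination vanishes:
\[
\{Y_{ij},X_{kl}\}+\{X_{ij},Y_{kl}\}=-\{c_{il},c_{kj}\}=0,
\]
and that is what kills the same-$a$ cross terms. The application needs a genuinely non-monomial $q$ (one interpolating $\lambda^{-1/2}$ on the spectrum of $C^*C$). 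So this is an essential missing step, not a formality.

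\textbf{How the paper handles it.} It factors $q(t)=\alpha\prod_\nu(t-\lambda_\nu)$ over $\mathbb C$. With $R=C^*C$, it proves that the invariant ``$[R_{ab},X_{kl}]=-\delta_{al}X_{kb}$ and $\{X_{ij},X_{kl}\}=0$ for all indices'' is preserved by $X\mapsto X(R-\lambda I)$ for every $\lambda\in\mathbb C$. The term linear in $\lambda$ is exactly the symmetrized anticommutator above, computed from $\{(XR)_{ij},X_{kl}\}=-X_{il}X_{kj}$.

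\textbf{How to repair your plan.} Your last paragraph is essentially this closure argument with $\lambda=0$. It goes through if you do two things:
\begin{enumerate}
\item Carry the covariance relation explicitly as part of the inductive invariant. It is needed both to compute $\{(XR)_{ij},(XR)_{kl}\}$ and to propagate itself, using $[R_{ab},R_{cd}]=\delta_{bc}R_{ad}-\delta_{ad}R_{cb}$.
\item Multiply by $R-\lambda I$ instead of $R$, or else prove the symmetrized cross-degree identities separately.
\end{enumerate}
Your other two routes are not carried out and I would drop them in favor of the closure argument. The first (normal-ordering $T_q(a)^2$ as $\sum c_{il}c_{ps}\Psi^{(i,p)}_{ls}$ and checking a symmetry of $\Psi$) depends on a canonical form you have not specified, and such normal forms are not unique. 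The second (writing $T_q(a)=\Phi' c(f)\Phi''$ with a single annihilator) is asserted without any mechanism for $k\ge1$.
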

The proof is an elementary calculation based on the CAR, and is deferred to the \hyperref[app:proof-square-zero]{appendix}.

In particular, we can pick $q$ to be a polynomial (depending on $n$) such that 
\begin{equation}
\qquad q(\lambda)=\frac{1}{\sqrt{\lambda}}~~~~(\forall \lambda\in \operatorname{spec}(C^*C)\setminus\{0\}), 
\end{equation}
For this choice, we have $V=Cq(C^*C)$. Therefore,
\begin{equation}\label{eq:vn-square-zero}
  T_V(a)^2 = 0.
\end{equation}

The square-zero identity gives the required injective-norm estimate.
\begin{corollary}\label{cor:injective-estimate}
For every $a\in M_n$,
\begin{equation}\label{eq:4.11}
  \norm{T_V(a)}_1\le \sqrt{\frac{D_n}{2}}\,\frac{\norm{V}_2}{n}\norm{a}_2.
\end{equation}
Consequently,
\begin{equation}\label{eq:Vinj}
  \norm{V}_\varepsilon\le \sqrt{\frac{D_n}{2n}}\,\norm{V}_2.
\end{equation}
\end{corollary}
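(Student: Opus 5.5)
Since $T_V(a)^2=0$ for every $a\in M_n$ by \cref{eq:vn-square-zero}, the plan is to use the elementary fact that a square-zero matrix $X\in M_D$ has rank at most $D/2$: its range lies inside its kernel, so $\operatorname{rank}X\le\dim\ker X=D-\operatorname{rank}X$. Cauchy--Schwarz on the at most $D_n/2$ nonzero singular values then gives
\begin{equation*}
\norm{T_V(a)}_1\le\sqrt{\operatorname{rank}T_V(a)}\,\norm{T_V(a)}_2\le\sqrt{\tfrac{D_n}{2}}\,\norm{T_V(a)}_2 .
\end{equation*}
This reduces the corollary to a Hilbert--Schmidt estimate for $T_V(a)$.

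Next I would bound $\norm{T_V(a)}_2$ by $\norm{a}_2\norm{V}_2/n$, which requires some symmetry of $V$. The CAR construction is covariant under $U(n)\times U(n)$. Indeed, for $u,v\in U(n)$ the family $\tilde c_{ij}=\sum_{k,l}u_{ik}v_{jl}c_{kl}$ again satisfies \cref{eq:4.1}, so by uniqueness of the Fock representation (second quantization of $u\otimes v$) there is a unitary $\Gamma=\Gamma(u\otimes v)$ on $\mathcal F_n$ with $\Gamma c_{ij}\Gamma^*=\tilde c_{ij}$. With the right index conventions this reads $(u^{T}\otimes \Gamma)\,C\,(v\otimes\Gamma^*)=C$, or a similar identity. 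Polar decomposition commutes with conjugation by unitaries, so $V$ inherits the same invariance. The consequence is that the Gram form
\begin{equation*}
G(a,b)=\tr\bigl(T_V(a)^*T_V(b)\bigr)
\end{equation*}
on $M_n$ is invariant under $a\mapsto uav$ (with appropriate transposes) for all unitaries $u,v$. The representation $a\mapsto uav$ of $U(n)\times U(n)$ on $M_n\cong\mathbb C^n\otimes\mathbb C^n$ is irreducible, so by Schur's lemma $G(a,b)=\lambda\tr(a^*b)$.

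To find $\lambda$, take the trace over the orthonormal basis $E_{ij}$:
\begin{equation*}
\lambda n^2=\sum_{i,j}\norm{x_{ij}}_2^2=\norm{V}_2^2 .
\end{equation*}
Hence $\norm{T_V(a)}_2=\norm{a}_2\norm{V}_2/n$, and combining with the first step gives \cref{eq:4.11}. For \cref{eq:Vinj}, apply \cref{eq:2.2}: for $w\in U(n)$ we have $\norm{w}_2=\sqrt n$, so
\begin{equation*}
\norm{T_V(w)}_1\le\sqrt{\tfrac{D_n}{2}}\,\frac{\norm{V}_2}{\sqrt n}=\sqrt{\tfrac{D_n}{2n}}\,\norm{V}_2 ,
\end{equation*}
and taking the supremum over $w$ finishes the argument.

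The main obstacle is the covariance step. It requires checking carefully that the Bogoliubov implementer $\Gamma(u\otimes v)$ conjugates $C$, and hence $V$, in the claimed way with the correct transposes and conjugates. It also requires confirming that the resulting action on $a$ is of the form $a\mapsto u'av'$, so that irreducibility applies. An alternative that avoids exact bookkeeping is to average $\norm{T_V(uav)}_2^2$ over Haar $u,v$ and note that this average equals $\norm{T_V(a)}_2^2$ by invariance.
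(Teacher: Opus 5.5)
Your proposal is correct and follows the paper's proof essentially step for step. You use the rank bound $\operatorname{rank}T_V(a)\le D_n/2$ from $T_V(a)^2=0$ together with Cauchy--Schwarz, then $U(n)\times U(n)$ covariance of $C$ (hence of $V$) with irreducibility and Schur's lemma to get $\norm{T_V(a)}_2=\norm{V}_2\norm{a}_2/n$ (normalized via the $E_{ij}$), and finally restrict to unitaries through \cref{eq:2.2}. The index bookkeeping you left open resolves harmlessly: with your convention $\tilde c_{ij}=\sum_{k,l}u_{ik}v_{jl}c_{kl}$ one gets $\Gamma T_C(b)\Gamma^*=T_C(u^{T}bv)$, equivalently $(u^*\otimes\Gamma)C(\bar v\otimes\Gamma^*)=C$. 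This identity passes to $V$ by uniqueness of the polar decomposition, giving exactly the invariance under $a\mapsto u'av'$ that you need.
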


\begin{proof}
From $T_V(a)^2=0$, one deduces
\begin{equation}
  \rank T_V(a)\le \frac{D_n}{2}. \label{eq:4.12}
\end{equation}
(Notice that $\operatorname{Im}T_V(a)\subseteq\ker T_V(a)$, hence $  \rank T_V(a)  \le   \dim\ker T_V(a)   =   D_n-\rank T_V(a)$.)
It follows that
\begin{align}
  \norm{T_V(a)}_1&\le \sqrt{\frac{D_n}{2}}\,\norm{T_V(a)}_2. \label{eq:4.13}
\end{align}

For $u,v\in U(n)$, consider the action
\begin{equation}
  c_{ij} \longmapsto \sum_{k,l}u_{ik}c_{kl}v_{lj}. \label{eq:4.14}
\end{equation}
This action preserves the CAR, hence is implemented unitarily on $\mathcal F_n$; $C$ and $V$ also transform covariantly under it.
Consequently, the quadratic form $a\mapsto\norm{T_V(a)}_2^2$ is invariant under the left-right action of $U(n)\times U(n)$ on $M_n$.
Such a quadratic form is unique up to a constant (due to the fact that this representation is irreducible and Schur's lemma), hence $\norm{T_V(a)}_2=\lambda\norm{a}_2$ for some $\lambda\ge 0$.

To determine $\lambda$, we take $a$ to be each elementary matrix $E_{ij}$ and sum over the squared identities, which gives
\begin{equation}
  n^2\lambda^2
  =
  \sum_{i,j}\norm{T_V(E_{ij})}_2^2
  =
  \sum_{i,j}\norm{x_{ij}}_2^2
  =
  \norm{V}_2^2.
\end{equation}
Hence, for every $a\in M_n$,
\begin{equation}
  \norm{T_V(a)}_2
  =\frac{\norm{V}_2}{n}\norm{a}_2.
  \label{eq:4.15}
\end{equation}
Combining \cref{eq:4.13,eq:4.15} proves the first estimate \cref{eq:4.11}. 

\Cref{eq:Vinj} follows by taking $a\in U(n)$, for which $\norm{a}_2=\sqrt n$, and using \cref{eq:2.2}.
\end{proof}

\subsection{The quarter-circle law}

Define $\mathrm{rk}_n=\rank V=\rank C$.
Since $V$ is a partial isometry, we have
\begin{equation}
  \norm{V}_1=\mathrm{rk}_n,
  \qquad
  \norm{V}_2=\sqrt{\mathrm{rk}_n}.
  \label{eq:V-Schatten-norms}
\end{equation}
It follows from \cref{cor:injective-estimate} that
\begin{equation}
  \frac{\norm{V}_1}{n\norm{V}_\varepsilon}
  \ge \sqrt{\frac2{nD_n}}\,\frac{\norm{V}_1}{\norm{V}_2}
  =\sqrt{\frac{2\mathrm{rk}_n}{nD_n}}.
  \label{eq:4.27}
\end{equation}
Thus it remains to show that $\mathrm{rk}_n/(nD_n)\to1$.

We prove the stronger statement about the limiting behaviour of the singular-value distribution of the CAR matrix $C$.
Let $\sigma_1,\ldots,\sigma_{nD_n}$ be the singular values of $\sqrt{\frac{2}{n}}C$, counted with multiplicity, and define the empirical distribution
\begin{equation}
  \nu_n := \frac1{nD_n} \sum_{k=1}^{nD_n} \delta_{\sigma_k}. \label{eq:4.16}
\end{equation}
We will prove that its limiting distribution as $n\to\infty$ is the quarter-circle law:
\begin{equation}
  d\nu_{\mathrm{QC}}(\sigma) = \frac1\pi \sqrt{4-\sigma^2}\, \mathbf1_{[0,2]}(\sigma)\,d\sigma. \label{eq:4.17}
\end{equation}

\begin{theorem}\label{thm:quarter-circle}
As $n\to\infty$, we have the weak convergence:
\begin{equation}
  \nu_n \Longrightarrow \nu_{\mathrm{QC}}. \label{eq:4.18}
\end{equation}
\end{theorem}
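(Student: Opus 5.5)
We will prove \cref{thm:quarter-circle} by the method of moments. Since $\nu_{\mathrm{QC}}$ is compactly supported, it is determined by its moments, so it suffices to prove convergence of the even moments of $\nu_n$. Its square-pushforward $\sigma\mapsto\sigma^2$ is the Marchenko--Pastur law of ratio $1$, whose moments are the Catalan numbers $\mathrm{Cat}_k$. Writing $\tau_N$ for the normalized trace on $M_N$, we have
\begin{equation*}
  \int\sigma^{2k}\,d\nu_n(\sigma)=\Bigl(\frac2n\Bigr)^k(\tau_n\otimes\tau_{D_n})\bigl((C^*C)^k\bigr),
\end{equation*}
and the goal is to show that this tends to $\mathrm{Cat}_k$ for each fixed $k$. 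Odd moments of $\nu_n$ are not needed: a family of measures on $[0,\infty)$ is tight once its second moments are bounded, and any subsequential limit then has square-pushforward with the Marchenko--Pastur moments. That law is determined by its moments, so the limit is $\nu_{\mathrm{QC}}$.

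The first step is to expand the trace. Since $C^*C=\sum_{i,j,l}E_{jl}\otimes c_{ij}^*c_{il}$, the quantity $(\tau_n\otimes\tau_{D_n})((C^*C)^k)$ equals $n^{-1}$ times a sum over closed index cycles of $\tau_{D_n}$ of a word of length $2k$ alternating $c^*,c,c^*,c,\dots$. The key observation is that $\tau_{D_n}$ is the tracial state $(\tfrac12 I_2)^{\otimes n^2}$ on the CAR algebra. This is a gauge-invariant quasi-free state with two-point functions $\tau(c_{ij}^*c_{kl})=\tfrac12\delta_{ik}\delta_{jl}=\tau(c_{ij}c_{kl}^*)$. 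Hence the fermionic Wick theorem expresses each word's trace as a signed sum over pairings of the $2k$ letters, each pairing an annihilator with a creator, with weight $2^{-k}$. Each pairing $\pi$ contributes $\pm 2^{-k}n^{-1}n^{f(\pi)}$, where $f(\pi)$ is the number of free index sums surviving the Kronecker deltas. This is exactly the genus expansion for a complex Ginibre (Wishart) matrix. The bound $f(\pi)\le k+1$ holds, with equality if and only if $\pi$ is noncrossing, and there are exactly $\mathrm{Cat}_k$ noncrossing pairings of this alternating type.

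The remaining bookkeeping is routine. After multiplying by $(2/n)^k$, each noncrossing pairing contributes $\pm1$, and each crossing pairing contributes $O(1/n)$ in absolute value. The number of pairings depends only on $k$, so the crossing terms vanish as $n\to\infty$ no matter what their signs are.

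The main obstacle is therefore the signs of the noncrossing terms: we must show they are all $+1$, so that the Catalan count is not cancelled. For a noncrossing pairing, each paired $c^*\!\ldots c$ (or $c\ldots c^*$) encloses a block of even length made of complete pairs. Moving a fermionic pair past an even number of fermions costs no sign, and the Wick sign is the parity of the crossing permutation, which is trivial for noncrossing pairings. We would verify this by induction, removing an innermost adjacent pair $\tau(c_{ij}^*c_{kl})$ or $\tau(c_{ij}c_{kl}^*)$. If the sign analysis becomes delicate, a fallback is to note that for noncrossing pairings the two paired operators are forced to carry identical double indices by the delta constraints. One can then compute the trace of the word directly by commuting adjacent pairs $c_\alpha^*c_\alpha$, which are even elements, to the front. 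This yields the product of $\tfrac12$'s exactly, up to corrections coming from index coincidences between different pairs, and those coincidences are of lower order in $n$.
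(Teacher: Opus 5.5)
Your proposal is correct and follows essentially the same route as the paper: the method of moments on the squared singular values, the fermionic Wick expansion with respect to the tracial (quasi-free) state with two-point functions $\tfrac12$, and the genus bound, which isolates the $\mathrm{Cat}_k$ noncrossing pairings (each with sign $+1$) and leaves the remaining terms at $O(1/n)$. Your tightness/subsequence argument for returning from the squared measures to $\nu_n$ simply makes explicit the paper's remark that it suffices to prove $\mu_n\Rightarrow\mu_{\mathrm{MP}}$.
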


Although the CAR matrix has no randomness, our proof follows the standard method in random matrix theory: compute the moments combinatorially and match them with those of the limit distribution \cite{MP67,Bia97,MS17}.
The proof is deferred to the \hyperref[app:proof-quarter-circle]{appendix}.

\begin{corollary}\label{cor:asymptotic-rank}
As $n\to\infty$, one has
\begin{equation}
  \frac{\mathrm{rk}_n}{nD_n}\longrightarrow1.
\end{equation}
\end{corollary}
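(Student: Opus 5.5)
The plan is to read off the rank from \cref{thm:quarter-circle}. By definition, $\mathrm{rk}_n$ is the number of nonzero singular values of $C$, and this equals the number of nonzero singular values of $\sqrt{2/n}\,C$. Therefore
\begin{equation*}
  \frac{\mathrm{rk}_n}{nD_n}=\nu_n\bigl((0,\infty)\bigr)=1-\nu_n(\{0\}).
\end{equation*}
Since $\mathrm{rk}_n\le nD_n$ trivially, it suffices to prove $\limsup_n \nu_n(\{0\})\le 0$. Equivalently, we need $\liminf_n \nu_n((0,\infty))\ge 1$.

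We obtain this from the portmanteau theorem. The limit $\nu_{\mathrm{QC}}$ is absolutely continuous, so $\nu_{\mathrm{QC}}(\{0\})=0$. The set $\{0\}$ is closed, which gives only the upper bound $\limsup_n\nu_n(\{0\})\le\nu_{\mathrm{QC}}(\{0\})$. That is exactly the direction we need. For a more explicit argument, fix $\epsilon>0$ and choose a continuous function $0\le f\le 1$ with $f=1$ on $[0,\epsilon/2]$ and $f=0$ on $[\epsilon,\infty)$. Then
\begin{equation*}
  \nu_n(\{0\})\le \int f\,d\nu_n \longrightarrow \int f\,d\nu_{\mathrm{QC}}\le \nu_{\mathrm{QC}}([0,\epsilon])\le \frac{2\epsilon}{\pi},
\end{equation*}
where the last inequality uses that the density is at most $2/\pi$. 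Letting $\epsilon\to0$ gives $\limsup_n\nu_n(\{0\})=0$, and hence $\mathrm{rk}_n/(nD_n)\to1$.

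There is no serious obstacle here beyond \cref{thm:quarter-circle} itself. The one point to check is the direction of the portmanteau inequality. Weak convergence alone cannot force point masses to vanish in the limit, but it does bound the mass at the closed set $\{0\}$ from above, and that is all the argument needs. The heavy lifting, namely the moment computation, is contained in the proof of \cref{thm:quarter-circle}. Combined with \cref{eq:4.27}, this corollary then yields $\liminf_n \|V\|_1/(n\|V\|_\varepsilon)\ge\sqrt2$.
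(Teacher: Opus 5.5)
Your proposal is correct and is essentially the paper's proof. Both identify $\nu_n(\{0\})=1-\mathrm{rk}_n/(nD_n)$ and apply the portmanteau upper bound for the closed set $\{0\}$ together with $\nu_{\mathrm{QC}}(\{0\})=0$; your bump-function estimate simply writes out that same portmanteau step explicitly.
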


\begin{proof}
  Since $\{0\}$ is a closed set, the weak convergence gives 
\begin{equation}
  \limsup_{n\to\infty}\nu_n(\{0\})
  \le \nu_{\mathrm{QC}}(\{0\})
  =0.
\end{equation}
On the other hand,
\begin{equation}
  \nu_n(\{0\})
  =\frac{\dim\ker C}{nD_n}
  =1-\frac{\mathrm{rk}_n}{nD_n}.
\end{equation}
This proves the claim.
\end{proof}

Combining \cref{cor:asymptotic-rank} with \cref{eq:4.27}, we obtain
\begin{equation}
  \lim_{n\to\infty}\frac{\norm{V}_1}{n\norm{V}_\varepsilon} \ge \sqrt2.
  \label{eq:4.28}
\end{equation}
Taking $z^{(n)}=V$ and combining this with the upper bound in \cref{thm:main} proves \cref{eq:saturate} and completes the proof of \cref{thm:main}.

\section{Applications}\label{sec:applications}
\subsection{Measures of bipartite correlation}

A bipartite state is uncorrelated precisely when it is the product of its marginals. 
Its departure from this product state may be tested either by product operators on the two subsystems or by an arbitrary operator on the joint system. 
The former gives the usual correlation function, while the latter measures the same departure in trace norm.

More precisely, let $\rho_{AB}$ be a state on $\mathbb C^n\otimes\mathbb C^m$, with marginals $\rho_A=\tr_B\rho_{AB}$ and $\rho_B=\tr_A\rho_{AB}$. 
Define the correlation operator by
\begin{equation}
  \Delta_\rho = \rho_{AB}-\rho_A\otimes\rho_B. \label{eq:5.1}
\end{equation}
For two operators $a\in M_n$ and $b\in M_m$, the corresponding correlation function is
\begin{equation}
  \operatorname{Cor}_\rho(a,b)
  = \tr\qty((a\otimes b)\rho_{AB})-\tr(a\rho_A)\tr(b\rho_B)
  = \tr\qty((a\otimes b)\Delta_\rho). \label{eq:5.2}
\end{equation}
So we have:
\begin{align}
  \mathcal C_{\mathrm{prod}}(\rho)
  &= \sup_{\substack{\norm{a}_\infty\le1\\ \norm{b}_\infty\le1}}
  \abs{\operatorname{Cor}_\rho(a,b)}
  = \norm{\Delta_\rho}_\varepsilon, \label{eq:5.3}\\
  \mathcal C_{\mathrm{glob}}(\rho)
  &= \sup_{\norm{X}_\infty\le1}\abs{\tr(X\Delta_\rho)}
  = \norm{\Delta_\rho}_1. \label{eq:5.4}
\end{align}
\Cref{thm:main} gives
\begin{equation}
  \mathcal C_{\mathrm{glob}}(\rho)
  \le \sqrt2\,\min\{n,m\}\,\mathcal C_{\mathrm{prod}}(\rho). \label{eq:5.5}
\end{equation}
However, it is not a priori clear whether $\sqrt{2}$ remains optimal under this restriction, since the witness $V$ from \cref{sec:lower-bound} is non-Hermitian and therefore cannot itself be a correlation operator.
We show that the same constant is nevertheless optimal by modifying this witness appropriately.

We begin by characterizing the operators that arise, up to scalar multiplication, as correlation operators.
\begin{proposition}\label{prop:correlation-operators}
A nonzero $h\in M_n\otimes M_m$ is a scalar multiple of $\Delta_\rho$ for some bipartite state $\rho_{AB}$ if and only if
\begin{equation}
  h=h^*, \qquad \tr_A h=0, \qquad \tr_B h=0. \label{eq:5.6}
\end{equation}
\end{proposition}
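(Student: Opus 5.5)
Necessity is a direct check. Given $h=c\,\Delta_\rho$, note that $\rho_{AB}$ and $\rho_A\otimes\rho_B$ are Hermitian. Hence $h^*=\bar c\,\Delta_\rho$. Nonzero $h$ forces $\Delta_\rho\neq0$, and then $h=h^*$ holds as soon as $c$ is real. I will therefore read "scalar multiple" as a real multiple; allowing a complex $c$ would break the Hermiticity condition, and the application only needs real multiples anyway. For the partial traces, $\tr_A\Delta_\rho=\rho_B-\tr(\rho_A)\rho_B=0$ because $\tr\rho_A=1$. The same computation gives $\tr_B\Delta_\rho=0$.

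For sufficiency, I construct the state explicitly. Given $h$ satisfying \cref{eq:5.6}, set
\[
\rho_{AB}=\frac{I}{nm}+t\,h
\]
for a small real $t>0$. This operator is Hermitian, has trace $1+t\tr h$, and $\tr h=\tr(\tr_A h)=0$, so its trace is $1$. It is positive semidefinite once $t\le (nm\norm{h}_\infty)^{-1}$, so it is a state.

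The marginals are $\rho_A=I/n+t\tr_B h=I/n$ and $\rho_B=I/m$. This gives $\Delta_\rho=\rho_{AB}-\frac{I}{n}\otimes\frac{I}{m}=t\,h$, and therefore $h=t^{-1}\Delta_\rho$.

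There is no real obstacle. The only point needing care is that $\tr h=0$ follows from the partial-trace conditions, which is what makes the maximally mixed perturbation normalized with marginals exactly $I/n$ and $I/m$.
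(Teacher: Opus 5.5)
Your proof is correct and is essentially the paper's argument: the paper also perturbs the maximally mixed state as $\frac{I_n}{n}\otimes\frac{I_m}{m}+\delta h$ with $|\delta|\le 1/(nm\norm{h}_\infty)$, and reads off $\Delta_\rho=\delta h$ from the marginals. You also make explicit two points the paper leaves implicit: ``scalar multiple'' must mean a real multiple for necessity to hold, and $\tr h=0$ (from the partial-trace conditions) is what gives unit trace.
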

\begin{proof}
The necessity of \cref{eq:5.6} is obvious. 
Conversely, suppose that they hold. We consider the operator
\begin{equation}
  \rho_{AB}
  = \frac{I_n}{n}\otimes\frac{I_m}{m}+\delta h, \label{eq:5.7}
\end{equation}
where $\delta\in \mathbb R$.
When $|\delta|\le1/(nm\norm{h}_\infty)$, $\rho_{AB}$ is positive semidefinite, has unit trace, and hence is a state. 
Its marginals are $I_n/n$ and $I_m/m$, so $\Delta_\rho=\delta h$.
\end{proof}

Since the ratio of the two norms is homogeneous, \cref{prop:correlation-operators} reduces the optimization over states to Hermitian operators with vanishing partial traces.

\begin{theorem}\label{thm:correlation}
\begin{equation}
\sup_{\substack{n,m\\
\rho_{AB}\ne\rho_A\otimes\rho_B}}
  \frac{\mathcal C_{\mathrm{glob}}(\rho)}
  {\min\{n,m\} \mathcal C_{\mathrm{prod}}(\rho)}
  = \sqrt2. \label{eq:5.8}
\end{equation}
\end{theorem}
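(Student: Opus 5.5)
The upper bound $\le\sqrt2$ follows at once from \cref{eq:5.5}, i.e.\ from \cref{thm:main} applied to $z=\Delta_\rho$. By \cref{prop:correlation-operators} and homogeneity, the matching lower bound reduces to a single construction. For each $n$ we will build a nonzero $h_n\in M_n\otimes M_{D}$ (with $D\ge n$) that is Hermitian and has $\tr_A h_n=0$ and $\tr_B h_n=0$, such that $\norm{h_n}_1/(n\norm{h_n}_\varepsilon)\to\sqrt2$. The plan is to obtain $h_n$ from the CAR witness $V$ of \cref{sec:lower-bound} in two steps: Hermitize, then remove the partial traces. We must check that neither step spoils the norm ratio asymptotically.

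\emph{Hermitization.} Working in $M_n\otimes M_{2D_n}$, set
\[
H=\begin{pmatrix}0&V\\ V^*&0\end{pmatrix},
\]
with the $2\times2$ block structure placed on the $B$ factor. Since the blocks are $\sum E_{ij}\otimes x_{ij}$ and $\sum E_{ji}\otimes x_{ij}^*$, this is legitimate. The singular values of $H$ are those of $V$, each doubled, so $\norm{H}_1=2\,\mathrm{rk}_n$. For the injective norm, $T_H(a)$ is block off-diagonal with blocks $T_V(a)$ and $T_{V^*}(a)$. Here $T_{V^*}(a)=\sum_{ij}a_{ji}x_{ij}^*=T_V(\bar a)^*$, so $\norm{T_H(a)}_1\le \norm{T_V(a)}_1+\norm{T_V(\bar a)}_1$. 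Using \cref{eq:2.2} (or directly the bound of \cref{cor:injective-estimate}, which only needs $\norm{a}_2=\sqrt n$ and is invariant under $a\mapsto\bar a$), we get $\norm{H}_\varepsilon\le 2\sqrt{D_n/(2n)}\sqrt{\mathrm{rk}_n}$. Hence $\norm{H}_1/(n\norm{H}_\varepsilon)\ge\sqrt{2\mathrm{rk}_n/(nD_n)}\to\sqrt2$ by \cref{cor:asymptotic-rank}.

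\emph{Killing the partial traces.} Define $h=H-\tfrac{I_n}{n}\otimes\tr_AH-\tr_BH\otimes\tfrac{I_{2D}}{2D}+\tfrac{\tr H}{2nD}I$. This projection preserves Hermiticity and makes both partial traces vanish. Ideally we show $\tr_AH=0$ and $\tr_BH=0$ outright, so nothing changes. We will try symmetry first. For $\tr_B H$: the entries are $\tr x_{ij}$ and $\overline{\tr x_{ij}}$. The $U(n)\times U(n)$ covariance from the proof of \cref{cor:injective-estimate} shows that $(\tr x_{ij})$ is a matrix $M$ with $uMv=M$ for all unitaries $u,v$, forcing $M=0$. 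The same argument with the covariant map gives $\tr_AH=\sum_i$ (diagonal blocks) $=0$, because $H$ has zero diagonal blocks in the $2\times2$ structure. More directly, $\tr_A V=\sum_i x_{ii}$ is the image of $\sum_i E_{ii}\otimes x_{ii}$, which is a fermion-number-lowering operator (odd parity changing, in fact $C$ lowers particle number by one and $V=Cq(C^*C)$ does too). Its lowering structure relative to the number grading gives zero diagonal, so traces vanish; I expect partial traces of $H$ in the $A$-direction to be off-diagonal blocks built from $\sum_i x_{ii}$, which need not be zero. Thus $\tr_AH=\begin{pmatrix}0&X\\X^*&0\end{pmatrix}$ with $X=\sum_ix_{ii}$.

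\emph{Main obstacle.} The main obstacle is therefore controlling the correction $\tfrac{I_n}{n}\otimes\tr_AH$ when $X\ne0$. Its trace norm is $\norm{\tr_AH}_1\le\sqrt{2D}\,\norm{\tr_AH}_2=2\sqrt{D}\norm{X}_2=2\sqrt D\norm{T_V(I)}_2=2\sqrt{D}\sqrt{\mathrm{rk}_n/n}$ by \cref{eq:4.15}. This is $O(\sqrt{D\,\mathrm{rk}_n/n})=O(\sqrt{D^2})=O(D)$, negligible against $\norm{H}_1=2\mathrm{rk}_n\sim 2nD$. For the injective norm, the correction contributes $\norm{T(a)}_1\le \frac{|\tr a|}{n}\norm{\tr_AH}_1\le 2\sqrt{D\,\mathrm{rk}_n/n}$, which is also lower order compared to $\norm{H}_\varepsilon\asymp\sqrt{D\,\mathrm{rk}_n/n}$? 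These are the same order, so this step needs care. To fix it, I would instead replace $H$ by $H'=H-\tfrac{I_n}{n}\otimes\tr_AH$ directly and re-run the square-zero argument blockwise. Alternatively, I would tensor with an extra ancilla, replacing $V$ by $V\otimes\sigma$ with $\sigma$ traceless and unitary such as a Pauli $Z$. Then all partial traces vanish identically, $\norm{\cdot}_1$ doubles, and the injective norm at most doubles by the same rank argument applied to $T_V(a)\otimes Z$: it is still square-zero with rank at most half. This ancilla trick is what I would pursue, since it gives $h$ exactly Hermitian with vanishing partial traces and the same ratio bound $\sqrt{2\mathrm{rk}_n/(nD_n)}\to\sqrt2$.
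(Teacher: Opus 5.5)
There is a genuine gap in the step you finally commit to: the ancilla trick does not give a valid witness. Your formula $T_V(a)\otimes Z$ puts $\sigma=Z$ on the $B$ side, and then only $\tr_B$ is killed. Indeed $\tr_A(V\otimes Z)=T_V(I_n)\otimes Z$, which is nonzero because \cref{eq:4.15} gives $\norm{T_V(I_n)}_2=\sqrt{\mathrm{rk}_n/n}>0$. This is exactly the obstruction you had just identified. Placing $\sigma$ on the $A$ side instead does make both partial traces vanish, but it raises the smaller local dimension from $n$ to $2n$ without improving the norm ratio. We have $\norm{\sigma\otimes V}_1=\norm{\sigma}_1\norm{V}_1$, while testing against $u\otimes w$ with $u\in U(2)$ gives $\norm{\sigma\otimes V}_\varepsilon\ge\norm{\sigma}_1\norm{V}_\varepsilon$. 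So the normalized ratio is at most half that of $V$, i.e.\ at most $\sqrt2/2$ by \cref{thm:main}, and hermitizing afterwards does not recover the lost factor.

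The route you mentioned and then dropped is the right one, and it is the paper's. Since $\tr_BH=0$ (automatic: the $2\times2$ structure sits on $B$ with vanishing diagonal blocks) and $\tr H=0$, your four-term projection is just $H'=H-\frac{I_n}{n}\otimes\tr_AH$. This is exactly the hermitization $\widetilde V\otimes E_{12}+\widetilde V^*\otimes E_{21}$ of $\widetilde V=V-\frac{I_n}{n}\otimes T_V(I_n)$. Your trace-norm estimate of the correction is fine. What is missing is the injective estimate. Do not bound the correction separately by the triangle inequality, which costs the factor $2$ you noticed; instead absorb it into the argument of $T_V$: $T_{\widetilde V}(a)=T_V\bigl(a-\frac{\tr a}{n}I_n\bigr)$. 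Centering is the Hilbert--Schmidt orthogonal projection onto traceless matrices, so $\norm{a-\frac{\tr a}{n}I_n}_2\le\norm{a}_2=\sqrt n$ for $a\in U(n)$. Since \cref{eq:4.11} holds for all $a\in M_n$ with only $\norm{a}_2$ on the right, this gives $\norm{\widetilde V}_\varepsilon\le b:=\sqrt{D_n/(2n)}\,\norm{V}_2$ with no loss, hence $\norm{H'}_\varepsilon\le 2b$. Using \cref{eq:4.11} again with $a=I_n$, $\norm{H'}_1=2\norm{\widetilde V}_1\ge 2(\mathrm{rk}_n-\norm{T_V(I_n)}_1)\ge 2(\mathrm{rk}_n-b)$. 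Together these give $\norm{H'}_1/(n\norm{H'}_\varepsilon)\ge\sqrt{2\mathrm{rk}_n/(nD_n)}-1/n\to\sqrt2$ by \cref{cor:asymptotic-rank}. One minor slip elsewhere: $T_{V^*}(a)=T_V(a^*)^*$, not $T_V(\bar a)^*$; this is harmless for the norm bounds.
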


\begin{proof}
The upper bound follows from \cref{eq:5.5}. For the lower bound, we will modify the CAR witness $V$ from \cref{sec:lower-bound}. 

To start, we define
\begin{equation}
    S=\tr_AV=T_V(I_n), \qquad
  \widetilde V=V-\frac{I_n}{n}\otimes S.
  \label{eq:centered-car-witness}
\end{equation}
Then $\tr_A\widetilde V=0$. 
Now we estimate the norms of $\widetilde V$.
Denote $b=\sqrt{\frac{D_n}{2n}}\norm{V}_2$ for convenience.
First, 
\begin{equation}
  \lVert\widetilde V\rVert_1
  \ge
  \norm{V}_1-\norm{S}_1,
  \qquad
\norm{S}_1\le b
\end{equation}
Here the second step is due to \cref{cor:injective-estimate}.
Second, for $w\in U(n)$, we have
\begin{align}
  &T_{\widetilde V}(w)
  =T_V\qty(w-\frac{\tr(w)}nI_n),\\
  \norm{T_{\widetilde V}(w)}_1
  \le
 \sqrt{\frac{D_n}{2}}\frac{\norm{V}_2}{n}& \norm{w-\frac{\tr(w)}nI_n}_2
 =\sqrt{\frac{D_n}{2}}\frac{\norm{V}_2}{n}\Bigl(n-\frac{\abs{\tr(w)}^2}{n}\Bigr)^{\frac12}
\leq b.
\end{align}
Hence $\lVert\widetilde V\rVert_\varepsilon\le b$ by \cref{eq:2.2}. 

It follows that
\begin{equation}\label{eq:wildezqratio}
  \frac{\lVert\widetilde V\rVert_1}{n\lVert\widetilde V\rVert_\varepsilon}
  \ge\frac{\norm{V}_1-b}{nb}
  = \sqrt{\frac{2}{nD_n}}\frac{\norm{V}_1}{\norm{V}_2} -\frac1n.  
\end{equation}
We have proved in \cref{cor:asymptotic-rank} that the first term converges to $\sqrt2$, while the second term $1/n$ tends to zero. 
Thus the centered witness $\widetilde V$ also attains the same asymptotic constant.

Next, we ``hermitize" it. Define
\begin{equation}
  h
  =\widetilde V\otimes E_{12}
  +\widetilde V^*\otimes E_{21}
  \in M_n\otimes M_{2D_n}.
\end{equation}
Then $h=h^*$ and $\tr_Ah=0$. Moreover, $\tr_Bh=0$. 
We estimate the norms of $h$.
The trace norm is simple:
\begin{equation}
  \norm{h}_1=2\lVert\widetilde V\rVert_1.
\end{equation}
For the injective norm, writing the test operator in $M_{2D_n}$ as a $2\times2$ block matrix, we are left with two resulting pairings, each bounded by $\lVert\widetilde V\rVert_\varepsilon$. Thus
\begin{equation}
  \norm{h}_\varepsilon\le2\lVert\widetilde V\rVert_\varepsilon.
\end{equation}
Therefore,
\begin{equation}
  \frac{\norm{h}_1}{n\norm{h}_\varepsilon}
  \ge\frac{\lVert\widetilde V\rVert_1}{n\lVert\widetilde V\rVert_\varepsilon}.
\end{equation}
Together with \cref{eq:wildezqratio}, this implies:
\begin{equation}
  \sup_{n,m}\frac1{\min\{n,m\}}
  \sup_{\substack{0\ne h=h^*\\ \tr_Ah=0\\ \tr_Bh=0}}
  \frac{\norm{h}_1}{\norm{h}_\varepsilon}
  =\sqrt2. \label{eq:5.9}
\end{equation}
This proves \cref{eq:5.8}.
\end{proof}

\subsection{Quantum data hiding}

Quantum data hiding is the phenomenon that two bipartite states may be readily distinguished by a joint measurement yet remain difficult to distinguish when the two parties are restricted to local measurements (and classical communication) \cite{TDL01,DLT02}.
This leads naturally to a quantitative question: how much distinguishing power can be lost when arbitrary joint measurements are replaced by a restricted family \cite{LPW18,CLP22}?

Following the distinguishability-norm framework of \cite{MWW09}, let $\mathcal M$ be a class of POVMs.
For a Hermitian operator $h$, define
\begin{equation}
  \norm{h}_{\mathcal M}
  =
  \sup_{(M_\gamma)\in\mathcal M}
  \sum_\gamma\abs{\tr(M_\gamma h)}.
  \label{eq:measurement-norm}
\end{equation}
Suppose that the system is prepared in a state $\rho$ with probability $p$ or in a state $\sigma$ with probability $1-p$, where $0\le p\le1$, and set $h=p\rho-(1-p)\sigma$.
The optimal bias under arbitrary joint measurements is $\norm{h}_1$, while the largest bias achievable with measurements from $\mathcal M$ is $\norm{h}_{\mathcal M}$.
For fixed local dimensions $n,m\geq2$, the worst-case loss is quantified by the data-hiding ratio
\begin{equation}
\begin{aligned}
    R_{\mathcal M}(n,m)
    =
    \sup_{\substack{p,\rho,\sigma\\ p\rho\ne(1-p)\sigma}}
    \frac{\norm{p\rho-(1-p)\sigma}_1}
         {\norm{p\rho-(1-p)\sigma}_{\mathcal M}} 
    =
    \sup_{0\ne h=h^*}
    \frac{\norm{h}_1}{\norm{h}_{\mathcal M}}.
\end{aligned}
  \label{eq:data-hiding-ratio}
\end{equation}

Let $\mathrm{LO}$ denote the class of nonadaptive local measurements, $\mathrm{LOCC}$ the class of measurements implementable by local operations and classical communication, $\mathrm{LOCC}_\to$ its one-way subclass, and $\mathrm{SEP}$ the class of separable measurements. The inclusions $\mathrm{LO}\subseteq\mathrm{LOCC}_\to\subseteq\mathrm{LOCC}\subseteq\mathrm{SEP}$ give
\begin{equation}
  R_{\mathrm{SEP}}(n,m)
  \le R_{\mathrm{LOCC}}(n,m)
  \le R_{\mathrm{LOCC}_\to}(n,m)
  \le R_{\mathrm{LO}}(n,m).
  \label{eq:measurement-ratio-chain}
\end{equation}
The SWAP operator gives $R_{\mathrm{SEP}}(n,m)\ge d$, where $d=\min\{n,m\}$ \cite{LPW18}. On the other hand, Corr{\^e}a, Lami, and Palazuelos proved $R_{\mathrm{LO}}(n,m)\le2\sqrt2\,d$ \cite{CLP22}. Thus
\begin{equation}
  d
  \le R_{\mathrm{SEP}}(n,m)
  \le R_{\mathrm{LOCC}}(n,m)
  \le R_{\mathrm{LOCC}_\to}(n,m)
  \le R_{\mathrm{LO}}(n,m)
  \le 2\sqrt2\,d.
  \label{eq:previous-data-hiding-chain}
\end{equation}

As an application of our results, we improve the upper bound to $\sqrt{2}d$.

\begin{proposition}\label{prop:lo-dominates-injective}
For every $z\in M_n\otimes M_m$, we have\footnote{A similar bound for the self-adjoint injective norm was proved in \cite[Proposition~22]{LPW18}. Our bound controls the complex injective norm directly and thus avoids the extra factor $\sqrt2$ used in \cite{CLP22}, due to \cite{RV15}, to compare the complex and self-adjoint injective norms.
}:
\begin{equation}
  \norm{z}_\varepsilon\le\norm{z}_{\mathrm{LO}}.
  \label{eq:lo-dominates-injective}
\end{equation}
\end{proposition}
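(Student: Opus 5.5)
The plan is to reduce the supremum defining $\norm{z}_\varepsilon$ in \cref{eq:1.1} to pairs of unitaries, and then to read each unitary pair as a product of two local projective measurements. Such a product measurement belongs to $\mathrm{LO}$, and its bias will dominate the pairing $\abs{\tr((a\otimes b)z)}$. The formula \cref{eq:measurement-norm} makes sense for any $z$, not only Hermitian $h$, so the argument does not need self-adjointness.

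\emph{Step 1 (reduction to unitaries).} For fixed $b$, the map $a\mapsto\abs{\tr((a\otimes b)z)}$ is convex on the operator-norm unit ball of $M_n$. This ball is the closed convex hull of $U(n)$, the fact already used for \cref{eq:2.2}. Hence the supremum over $a$ equals the supremum over $a\in U(n)$. The same argument applies to $b$ with $a$ fixed. Therefore
\begin{equation*}
  \norm{z}_\varepsilon=\sup_{u\in U(n),\,v\in U(m)}\abs{\tr((u\otimes v)z)}.
\end{equation*}

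\emph{Step 2 (spectral decomposition).} Write $u=\sum_k\lambda_kP_k$ and $v=\sum_l\mu_lQ_l$, where the $P_k$ and $Q_l$ are the spectral projections and $\abs{\lambda_k}=\abs{\mu_l}=1$. Each family sums to the identity: $\sum_kP_k=I_n$ and $\sum_lQ_l=I_m$. Then
\begin{equation*}
  \abs{\tr((u\otimes v)z)}=\Bigl\lvert\sum_{k,l}\lambda_k\mu_l\tr((P_k\otimes Q_l)z)\Bigr\rvert\le\sum_{k,l}\abs{\tr((P_k\otimes Q_l)z)}.
\end{equation*}

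\emph{Step 3 (identify an LO measurement).} The family $(P_k\otimes Q_l)_{k,l}$ is the POVM obtained when Alice and Bob independently measure the projective measurements $(P_k)$ and $(Q_l)$. It is therefore a nonadaptive local measurement. By \cref{eq:measurement-norm}, the right-hand side of Step 2 is at most $\norm{z}_{\mathrm{LO}}$. Taking the supremum over $u$ and $v$ gives \cref{eq:lo-dominates-injective}.

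The only point needing care is Step 1. A naive approach would use a Hermitian (self-adjoint) test operator and lose a factor $\sqrt2$ when comparing complex and self-adjoint injective norms. Using complex unitaries directly avoids this loss, because unimodular eigenvalues are still absorbed by the triangle inequality in Step 2. Once the reduction to unitaries is justified, the rest is immediate.
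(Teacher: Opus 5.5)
Your proposal is correct and follows essentially the same route as the paper: reduce the supremum to unitaries, spectrally decompose them, apply the triangle inequality to absorb the unimodular eigenvalues, and recognize $(P_k\otimes Q_l)$ as a product of local projective measurements in $\mathrm{LO}$. The only cosmetic difference is that you use the spectral projections directly, whereas the paper refines them to rank-one projectors; both are valid nonadaptive local measurements.
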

\begin{proof}
By the fact that contraction operators are linear combinations of unitaries, the supremum defining $\norm{z}_\varepsilon$ may be taken over unitaries.
For any unitaries $u\in U(n)$ and $v\in U(m)$, we take the spectral decompositions:
\begin{equation}
  u=\sum_\alpha\lambda_\alpha E_\alpha,
  \qquad
  v=\sum_\beta\mu_\beta F_\beta,
  \qquad
  \abs{\lambda_\alpha}=\abs{\mu_\beta}=1,
\end{equation}
where $(E_\alpha)$ and $(F_\beta)$ are rank-1 projectors. Then
\begin{equation}
    \abs{\tr\qty((u\otimes v)z)}
    =
    \abs{\sum_{\alpha,\beta}\lambda_\alpha\mu_\beta
    \tr\qty((E_\alpha\otimes F_\beta)z)} 
    \le
    \sum_{\alpha,\beta}
    \abs{\tr\qty((E_\alpha\otimes F_\beta)z)}
    \le\norm{z}_{\mathrm{LO}}.
\end{equation}
Taking the supremum over $u$ and $v$ proves the claim.
\end{proof}

Define
\begin{equation}
  R_\varepsilon(n,m)
  =
  \sup_{0\ne h=h^*}
  \frac{\norm{h}_1}{\norm{h}_\varepsilon}.
  \label{eq:injective-data-hiding-ratio}
\end{equation}
\Cref{prop:lo-dominates-injective,thm:main} give
\begin{equation}
  R_{\mathrm{LO}}(n,m)
  \le R_\varepsilon(n,m)
  \le \sqrt2\,\min\{n,m\}.
  \label{eq:lo-data-hiding-upper-bound}
\end{equation}
Together with \cref{eq:measurement-ratio-chain} and the SWAP lower bound, and suppressing the common argument $(n,m)$, this yields
\begin{equation}
  d
  \le R_{\mathrm{SEP}}
  \le R_{\mathrm{LOCC}}
  \le R_{\mathrm{LOCC}_\to}
  \le R_{\mathrm{LO}}
  \le R_\varepsilon
  \le \sqrt2\,d.
  \label{eq:data-hiding-chain}
\end{equation}

\section{Conclusion and outlook}\label{sec:outlook}

We proved that the optimal universal constant in the comparison between the trace norm and the injective tensor norm associated with trace norms is $\sqrt{2}$. The upper bound follows from an $L_1$ noncommutative Khintchine inequality whose random coefficients are the entries of a Haar unitary, while the lower bound is obtained from a construction based on the fermionic canonical anticommutation relations (CAR). The same asymptotic constant remains sharp for Hermitian tensors with vanishing partial traces. The bound also yields upper bounds for quantum data-hiding against local operations and other natural restricted classes of POVMs.

Several questions remain. 
One may replace the trace norm by other Schatten norms and ask for the corresponding sharp comparison with product observables. 
For multipartite systems, one may compare global observables with fully product observables and ask for the sharp dependence of the optimal bound on the number of subsystems and their local dimensions.

The dimensional requirements of the lower bound are less clear. Our construction uses a space of dimension $2^{n^2}$. It would also be interesting to determine whether the second local dimension can be reduced (for example, to a polynomial in $n$) while retaining the asymptotic constant $\sqrt{2}$.
An extremal version of this question is to determine the value of \(C_{n,n}\). More generally, if we impose the restriction $m/n\to\lambda$, how does the limiting constant depend on $\lambda$?  

\section*{Acknowledgements and AI use}
We thank Sergey Bravyi for encouraging discussions. ZL thanks Raz Firanko and Timothy Hsieh for a previous collaboration \cite{li2026unified} that sparked his interest in the question studied here. XS acknowledges financial support from NSERC Discovery Grant No. RGPIN-2024-04569 and No. DGECR-2024-004.

We used AI tools (ChatGPT and Claude) to draft the manuscript (later extensively revised by humans) and to develop the Lean formalization. 
We manually verified that the Lean statements match the corresponding statements in the paper.
These tools were also used for preliminary numerics (not reported here) to test constructions before proving the results. The main findings are human-generated.

\appendix
\section{Missing proofs for the lower bound}
\phantomsection\label{app:proof-square-zero}
\begin{proof}[Proof of \Cref{prop:square-zero}]

Let $C=(c_{ij})_{i,j=1}^n$ and $R=C^*C$, so that $R_{ab}=\sum_i c_{ia}^*c_{ib}$. 

First, we claim that
\begin{align}
  &\comm{R_{ab}}{c_{kl}}=-\delta_{al}c_{kb},\qquad   \acomm{c_{ij}}{c_{kl}}=0,  \label{eq:Cadmissible}\\
  &\comm{R_{ab}}{R_{cd}}=\delta_{bc}R_{ad}-\delta_{ad}R_{cb}.   \label{eq:R-commutator}
\end{align}
The first equation follows from the CAR and the identity $[XY,Z]=X\{Y,Z\}-\{X,Z\}Y$; the second is one of the CAR relations; the third follows from the first and the identity $[X,YZ]=[X,Y]Z+Y[X,Z]$.

Second, we claim that the properties \cref{eq:Cadmissible} are closed under certain multiplication.
More precisely, we claim that if a block matrix $X=(X_{ij})$ satisfies:
\begin{equation}
  \comm{R_{ab}}{X_{kl}}=-\delta_{al}X_{kb},
  \qquad
  \acomm{X_{ij}}{X_{kl}}=0,
  \label{eq:admissible-matrix}
\end{equation}
then the above equations also hold when $X$ is replaced by $X(R-\lambda I)$ for any $\lambda\in \mathbb C$.

For convenience, denote $Y=XR$. Using \cref{eq:admissible-matrix,eq:R-commutator}, we obtain
\begin{equation}
\begin{aligned}
  \comm{R_{ab}}{Y_{kl}}
  &=\sum_t\comm{R_{ab}}{X_{kt}}R_{tl}+\sum_tX_{kt}\comm{R_{ab}}{R_{tl}} \\
  &=-X_{kb}R_{al}+X_{kb}R_{al}-\delta_{al}\sum_tX_{kt}R_{tb} \\
  &=-\delta_{al}Y_{kb}.
\end{aligned}
\end{equation}
Thus $Y$, and hence $Y-\lambda X$, satisfies the covariance relation in \cref{eq:admissible-matrix}. 
For the anticommutation relation, we have
\begin{align}\label{eq:expand}
  \acomm{Y_{ij}-\lambda X_{ij}}{Y_{kl}-\lambda X_{kl}}
  =\acomm{Y_{ij}}{Y_{kl}}
  -\lambda\qty(\acomm{Y_{ij}}{X_{kl}}+\acomm{X_{ij}}{Y_{kl}})
  +\lambda^2\acomm{X_{ij}}{X_{kl}}.
\end{align}
The quadratic term vanishes due to \cref{eq:admissible-matrix}.
For the linear term, using \cref{eq:admissible-matrix} and the identity $\{AB,C\}=A[B,C]+\{A,C\}B$, we get:
\begin{equation}
  \acomm{Y_{ij}}{X_{kl}}=-X_{il}X_{kj},
  \qquad
  \acomm{X_{ij}}{Y_{kl}}=-X_{kj}X_{il}.
  \label{eq:mixed-anticommutators}
\end{equation}
Their sum vanishes because the entries of $X$ anticommute (\cref{eq:admissible-matrix}). 
For the first term in \cref{eq:expand}, a similar calculation gives
\begin{equation}
\begin{aligned}
  \acomm{Y_{ij}}{Y_{kl}}
  =&\sum_{a,b}X_{ia}X_{kb}\comm{R_{aj}}{R_{bl}}
  +\sum_{a,b}X_{ia}\comm{R_{aj}}{X_{kb}}R_{bl} 
  +\sum_{a,b}X_{kb}\comm{R_{bl}}{X_{ia}}R_{aj} \\
  =&-\sum_a\acomm{X_{il}}{X_{ka}}R_{aj}
  =0.
\end{aligned}
\end{equation}
Therefore, \cref{eq:expand} vanishes.
This proves the claim.

Now we use this closure property inductively.
If $q$ is nonzero ($q=0$ is trivial), factor it over $\mathbb C$ as $q(t)=\alpha\prod_{\nu=1}^{\deg q}(t-\lambda_\nu)$. 
Starting with $X_0=\alpha C$, define
\begin{equation}
  X_\nu=X_{\nu-1}(R-\lambda_\nu I),
  \qquad 1\leq\nu\leq \deg q.
\end{equation}
The closure property then shows that each $X_\nu$, in particular $z = X_{\deg q}=Cq(R)$, satisfies \cref{eq:admissible-matrix}. 

Writing $z$ as $(Q_{ij})$, we have $T_q(a)=\sum_{i,j}a_{ij}Q_{ij}$. Therefore, for every $a\in M_n$,
\begin{equation}
  T_q(a)^2
  =\frac{1}{2} \sum_{i,j,k,l} a_{ij}a_{kl}\acomm{Q_{ij}}{Q_{kl}}
  =0.
\end{equation}
This concludes the proof.
\end{proof}

\phantomsection\label{app:proof-quarter-circle}
\begin{proof}[Proof of \Cref{thm:quarter-circle}]

For convenience, we pass to the squared singular values. Denote $A=\frac{2}{n}C^*C$.
The eigenvalues of $A$ are $\sigma_k^2$, so we define their empirical distribution by
\begin{equation}
  \mu_n
  =
  \frac1{nD_n}
  \sum_{k=1}^{nD_n}\delta_{\sigma_k^2}.
\end{equation}
It is enough to prove that $\mu_n$ converges weakly to the pushforward of $\nu_{\mathrm{QC}}$ under $\sigma\mapsto\sigma^2$, which is a Marchenko--Pastur law $\mu_{\mathrm{MP}}$ \cite{MP67}, whose moments are the Catalan numbers:
\begin{equation}
  \int_0^4x^p\,d\mu_{\mathrm{MP}}(x)
  =
  \int_0^2\sigma^{2p}\,d\nu_{\mathrm{QC}}(\sigma)
  =
  C_p
  =
  \frac1{p+1}\binom{2p}{p}. \label{eq:4.19}
\end{equation}
Since $\mu_{\mathrm{MP}}$ has compact support, it is uniquely determined by its moments \cite{MS17}. 
It therefore suffices to prove that, for every $p\geq0$,
\begin{equation}
  \lim_{n\to\infty}\int x^p\,d\mu_n(x)=C_p.
  \label{eq:MP-moment-target}
\end{equation}
The case $p=0$ is immediate.
In the following, we assume $p\geq1$ and prove it by explicit calculation.

The block entries of $A$ are
\begin{equation}
  A_{ab} = \frac2n \sum_i c_{ia}^*c_{ib}. \label{eq:4.20}
\end{equation}
Let $\tau=\frac1{D_n}\tr$ be the normalized trace on the Fock space $\mathcal F_n$. The CAR implies that
\begin{equation}
  \tau(c_\alpha^*c_\beta) = \tau(c_\beta c_\alpha^*) = \frac12\delta_{\alpha\beta}. \label{eq:4.2}
\end{equation}
Expanding the trace gives
\begin{equation}
  \int x^p\,d\mu_n(x)
  =
  \frac1{nD_n}\tr(A^p)
  =
  \frac{2^p}{n^{p+1}}
  \sum_{\substack{a_1,\ldots,a_p\\ i_1,\ldots,i_p}}
  \tau\qty(\prod_{r=1}^p c_{i_ra_r}^*c_{i_ra_{r+1}}).
  \label{eq:4.21}
\end{equation}
Here every $a_r$ and $i_r$ ranges from $1$ to $n$, and we use the cyclic convention $a_{p+1}=a_1$.

We now evaluate each summand in \cref{eq:4.21} using the fermionic Wick formula. 
In our case, we need to pair creation operators with annihilation operators.
Each pairing corresponds to a permutation $\pi\in S_p$, which pairs $c_{i_ra_r}^*$ with $c_{i_{\pi(r)}a_{\pi(r)+1}}$ ($1\leq r\leq p$), contributing
\begin{equation}
  \frac12\delta_{i_r,i_{\pi(r)}}\delta_{a_r,a_{\pi(r)+1}}.
\end{equation}
Let $\gamma=(1\,2\,\cdots\,p)$. With cyclic indexing, $a_{\pi(r)+1}=a_{\gamma\pi(r)}$, and hence
\begin{equation}
  \tau\qty(\prod_{r=1}^p c_{i_ra_r}^*c_{i_ra_{r+1}})
  =\frac1{2^p}\sum_{\pi\in S_p}\varepsilon(\pi)
  \prod_{r=1}^p\delta_{i_r,i_{\pi(r)}}\delta_{a_r,a_{\gamma\pi(r)}}.
  \label{eq:4.22}
\end{equation}
Here the Wick sign $\varepsilon(\pi)$ is determined by the parity of the crossing number of the associated pairing\footnote{The number of crossings in the diagram ($2p$ points, $p$ edges) obtained by joining each creation--annihilation pair.}.

Substituting \cref{eq:4.22} into \cref{eq:4.21} and exchanging the order of summation, the Kronecker constraints leave $\#\pi$ free $i$-indices and $\#(\gamma\pi)$ free $a$-indices, where $\#\rho$ denotes the number of cycles of a permutation $\rho$, giving:
\begin{equation}
  \int x^p\,d\mu_n(x) = \sum_{\pi\in S_p} \varepsilon(\pi) n^{-p-1+\#\pi+\#(\gamma\pi)}, \label{eq:4.23}
\end{equation}

The exponent in \cref{eq:4.23} is standard in random matrix theory. It is nonpositive and vanishes if and only if $\pi^{-1}$ is noncrossing as a permutation\footnote{Namely, every cycle is clockwise, and there are no $a<b<c<d$ such that $a,c$ belong to one cycle and $b,d$ belong to a different cycle.}\cite{Bia97, MS17}.
There are $C_p$ such permutations.
Moreover, $\pi^{-1}$ being a noncrossing permutation implies the pairing associated with $\pi$ is noncrossing, hence the Wick sign $\varepsilon(\pi)=1$. 
All remaining permutations are of lower order. Therefore,
\begin{equation}
  \int x^p\,d\mu_n(x) = C_p+O_p(n^{-1}). \label{eq:4.24}
\end{equation}
This proves \cref{eq:MP-moment-target}, and the method of moments gives $\mu_n\Rightarrow\mu_{\mathrm{MP}}$, and hence $\nu_n\Rightarrow\nu_{\mathrm{QC}}$.
\end{proof}

\bibliographystyle{alpha} \bibliography{global_vs_product}
\end{document}